\def\withnotes{0}
\newcommand{\p}{\mathbf{p}}
\newcommand{\q}{\mathbf{q}}
\newcommand{\unif}{\mathbf{u}}
\newcommand{\infnorm}[1]{\norminf{#1}}
\newcommand{\twonorm}[1]{\normtwo{#1}}
\newcommand{\twonorms}[1]{\normtwo{#1}^2}
\newcommand{\threenorm}[1]{\norm{#1}_3}
\newcommand{\threenormc}[1]{\norm{#1}_3^3}
\newcommand{\U}{\class_U}
\newcommand{\Ex}{\shortexpect}
\def\authornametb{Tu\u{g}kan Batu}
\def\authoraffitb{London School of Economics. Email: \email{t.batu@lse.ac.uk}.}
\def\authornamecc{Cl\'ement L. Canonne}
\def\authorafficc{Columbia University. Email: \email{ccanonne@cs.columbia.edu}. Research supported by NSF grants CCF-1115703 and NSF CCF-1319788.}
\title{Generalized Uniformity Testing}
\date{\today}
\author{
  \tcolor{\authornametb}\thanks{\authoraffitb}
  \and \ccolor{\authornamecc}\thanks{\authorafficc}
}
\begin{document}

\maketitle

\pagenumbering{gobble}
\begin{abstract}
    In this work, we revisit the problem of \emph{uniformity testing} of
discrete probability distributions. A fundamental problem in distribution testing, testing uniformity over a \emph{known} domain has been addressed over a significant line of works, and is by now fully understood.

The complexity of deciding whether an unknown distribution is uniform over its unknown (and arbitrary) \emph{support}, however, is much less clear. Yet, this task arises as soon as no prior knowledge on the domain is available, or whenever the samples originate from an unknown and unstructured universe. In this work, we introduce and study this \emph{generalized} uniformity testing question, and establish nearly tight upper and lower bound showing that -- quite surprisingly -- its sample complexity significantly differs from the known-domain case. Moreover, our algorithm is intrinsically \emph{adaptive}, in contrast to the overwhelming majority of known distribution testing algorithms.

\end{abstract}

\ifnum\withnotes=1
  \clearpage
  \listoftodos
  \hfill
  \newpage
  \tableofcontents
  \clearpage
\fi

\clearpage\pagenumbering{arabic}

\section{Introduction}\label{sec:intro}
Property testing, as introduced in the seminal works of~\cite{RS:96,GGR:98}, is the analysis and study of ultra-efficient and randomized decision algorithms, which must answer a promise problem yet cannot afford to query their whole input. A very successful and prolific area of theoretical computer science, property testing also gave rise to several subfields, notably that of \emph{distribution} testing, where the input consists of independent \emph{samples} from a probability distribution, and one must now verify if the underlying unknown distribution satisfies a given property of interest (cf.~\cite{Ron:08,Ron:09,Rubinfeld:12:Survey,Canonne:15:Survey, Gol:17} for surveys on property and distribution testing).

One of the earliest and most studied questions in distribution testing
is that of \emph{uniformity testing}, where, given independent samples
from an arbitrary probability distribution $\p$ on a discrete domain
$\domain$, one has to decide whether (i) $\p$ is uniform on $\domain$,
or (ii) $\p$ is ``far'' (i.e., at total variation distance at least
$\eps$) from the uniform distribution on $\domain$. Arguably the most
natural distribution testing problem, testing uniformity is also one of the most fundamental; algorithms for uniformity testing end up being crucial building blocks in many other distribution testing algorithms~\cite{BFFKRW:01,DK:16,Gol:16}. Fully understanding the sample complexity of the problem, as well as the possible trade-offs it entails, thus prompted a significant line of research.

Starting with the work of Goldreich and Ron~\cite{GRexp:00}
(which considered it in the context of testing expansion of graphs), uniformity testing was studied and analyzed in a series of work~\cite{BFFKRW:01,Paninski:08,VV:14,DKN:15,ADK:15,DGPP:16}, which culminated with the tight sample complexity bound of $\bigTheta{\sqrt{n}/\eps^2}$ for testing uniformity on a discrete domain of size $n$. (Moreover, the corresponding algorithms are also efficient, running in time linear in the number of samples they take.)

Given this state of affairs, testing uniformity of discrete
distributions appears to be fully settled; however, as often is the case, the devil is in the detail. Specifically, all the aforementioned results address the case where the domain $\domain$ is explicitly known, and the task is to find out whether $\p$ is the uniform distribution \emph{on this domain}. Yet, in many cases, samples (or data points) are drawn from the underlying distribution without such prior knowledge, and the relevant question is whether $\p$ is uniform on its \emph{support} -- which is unknown, of arbitrary size, and can be completely unstructured.\footnote{In particular, one cannot without loss of generality assume that the support is the set of consecutive integers $\{1,\dots,n\}$.}

In this work, we focus on this latter question: in particular, we do not assume any \textit{a priori} knowledge on the domain $\domain$, besides its being discrete. Our goal is then the following: given independent samples from an arbitrary probability distribution $\p$ on $\domain$, we must distinguish between the case (i) $\p$ is uniform on \emph{some subset of $\domain$}, and (ii) $\p$ is far from \emph{every} such uniform distribution. As we shall see, this is not merely a technicality: this new task is provably harder than the case where $\domain$ is known. Indeed, this difference intuitively stems from the uncertainty on where the support of $\p$ lies, which prevents any reduction to the simple, known-domain case.

Furthermore, one crucial feature of the problem is that it intrinsically calls for \emph{adaptive} algorithms. This is in sharp contrast to the overwhelming majority of distribution testing algorithms, which (essentially) draw a prespecified number of samples all at once, before processing them and outputting a verdict. This is because, in our case, an algorithm is provided only with the proximity parameter $\eps\in(0,1]$, and has no upper bound on the domain size $n$ nor on any other parameter of the problem. Therefore, it must keep on taking samples until it has ``extracted'' enough information -- and is confident enough that it can stop and output an answer. (In this sense, our setting is closer in spirit to the line of work pioneered in Statistics by Ingster~\cite{Ingster:2000,Fromont:2006} than to the ``instance-optimal'' setting of Valiant and Valiant~\cite{VV:14,BCG:16}, as in the latter the algorithm is still provided with a massive parameter in the form of the full description of a reference probability distribution.)

\subsection{Our Results}

Given a discrete, possibly unbounded domain $\domain$, we let $\class_U$ denote the set of all probability distributions that are supported and uniform on some subset of $\domain$, that is
\[
  \class_U \eqdef \setOfSuchThat{ \unif_S }{S\subseteq \domain}
\]
where, for a given set $S\subseteq\domain$, $\unif_S$ denote the uniform distribution on $S$. In what follows, we write $\totalvardist{\p}{\q}$ for the total variation distance between two distributions $\p,\q$ on $\domain$.

\begin{restatable}{theorem}{theomainub}\label{theo:main:theorem:ub}
  There exists an algorithm which, given sample access to an arbitrary distribution $\p$ over some unknown discrete domain $\domain$, as well as parameter $\eps\in(0,1]$, satisfies the following.
  \begin{enumerate}%[(i)]
    \item If $\p\in\class_U$, then the algorithm outputs \accept with probability at least $2/3$; while
    \item if $\totalvardist{\p}{\class_U}>\eps$, then the algorithm outputs \reject with probability at least $2/3$.
  \end{enumerate}
  Moreover, the algorithm takes $\bigO{\frac{1}{\eps^6\norm{\p}_3}}$ samples in expectation, and is efficient (in the number of samples taken).
\end{restatable}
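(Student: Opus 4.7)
The natural characterization to exploit is Cauchy--Schwarz in the form $\norm{\p}_2^4 = \big(\sum_i p_i\cdot p_i\big)^2 \leq \big(\sum_i p_i\big)\big(\sum_i p_i^3\big) = \norm{\p}_3^3$, with equality exactly when $i\mapsto p_i$ is constant on $\supp{\p}$---that is, when $\p\in\class_U$. Equivalently, $\norm{\p}_3^3-\norm{\p}_2^4 = \var_{i\sim\p}[p_i] \geq 0$, with equality iff $\p\in\class_U$. Both $\norm{\p}_2^2$ and $\norm{\p}_3^3$ admit unbiased $U$-statistic estimators from \iid samples, being respectively the probability that two (resp.\ three) fresh samples all coincide: from $m$ samples let $X_r$ count the unordered $r$-tuples whose coordinates all agree, and set $\hat{N}_r \eqdef X_r/\binom{m}{r}$, so that $\shortexpect[\hat{N}_r]=\norm{\p}_r^r$ for $r\in\{2,3\}$.

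\textbf{Step 1 (robustness of the flatness statistic).} The key analytic lemma is a quantitative converse to Cauchy--Schwarz: if $\totalvardist{\p}{\class_U}>\eps$, then $\norm{\p}_3^3 \geq (1+c\,\eps^a)\norm{\p}_2^4$ for absolute constants $c>0$ and an integer exponent $a\geq 1$ (targeting $a=3$, which produces the $\eps^6$ factor in the end). The natural route is by contrapositive: when $\norm{\p}_3^3-\norm{\p}_2^4$ is a tiny fraction of $\norm{\p}_2^4$, the self-sampled mass $p_i$ concentrates tightly around its mean $\norm{\p}_2^2$, and a Chebyshev-style truncation argument isolates a set $S \eqdef \{i : p_i \approx \norm{\p}_2^2\}$ on which $\p$ is nearly uniform and which carries almost all of $\p$'s mass, yielding closeness to $\unif_S \in \class_U$. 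I expect the main difficulty to lie in pinning down the correct exponent $a$ and in controlling distributions whose support splits into a flat ``bulk'' together with a tail of small masses, which cannot simply be ignored since they contribute comparably to the $\ell_2$ and $\ell_3$ moments.

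\textbf{Step 2 (adaptive estimator and stopping rule).} Since no \emph{a priori} bound on $n$ or $\norm{\p}_3$ is available, the algorithm proceeds in rounds $k=1,2,\dots$, drawing $m_k \eqdef 2^k$ fresh samples in round $k$. At round $k$ it computes $\hat{N}_2, \hat{N}_3$ and the ratio statistic $T_k \eqdef \hat{N}_3/\hat{N}_2^2$, which targets $T(\p) \eqdef \norm{\p}_3^3/\norm{\p}_2^4 \geq 1$. Standard second-moment calculations on the $U$-statistics $X_2$ and $X_3$ bound $\var(\hat{N}_r)$ in terms of $\norm{\p}_2$, $\norm{\p}_3$ and $m_k$, and these in turn yield a sample-size certificate of the shape $m_k \geq C/(\eps^a \hat{N}_3^{1/3})$ (using $\hat{N}_3^{1/3}$ as a proxy for $\norm{\p}_3$), under which $T_k$ approximates $T(\p)$ to within additive $c\,\eps^a/3$ with high probability. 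Once the certificate is met, the algorithm outputs \reject if $T_k \geq 1+2c\,\eps^a/3$ and \accept otherwise; if not, it proceeds to round $k+1$.

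\textbf{Step 3 (correctness and expected sample complexity).} Union-bounding round-$k$ failure probabilities along a geometric schedule (e.g., budget $2^{-(k+2)}$ per round) keeps the total error probability below $1/3$. For \yes-instances, with constant probability the certificate is first met at the round where $m_k = \Theta(1/(\eps^6\norm{\p}_3))$, and on that round $T_k$ falls below the accept threshold; the geometric doubling then gives expected sample complexity $\bigO{1/(\eps^6\norm{\p}_3)}$. For \no-instances, Step~1 ensures that once $m_k$ reaches the same threshold, the true statistic $T(\p)$ exceeds $1+c\,\eps^a$ by a margin detectable by $T_k$, so the algorithm rejects by then, giving a matching expected sample count.
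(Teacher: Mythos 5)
Your proposal shares the same structural core as the paper's proof: the Cauchy--Schwarz equality condition $\norm{\p}_3^3 = \normtwo{\p}^4 \Leftrightarrow \p\in\class_U$, and a quantitative converse with cube-root (exponent-$3$) dependence obtained by viewing $R\eqdef p_i$ for $i\sim\p$ as a random variable, bounding $\var[R]=\norm{\p}_3^3-\normtwo{\p}^4$, and applying a Chebyshev-style truncation to isolate a near-uniform bulk. This is exactly the paper's \autoref{lemma:equal:moments:characterize:uniform} and \autoref{lem:smallnormtouni}, so Step~1 is on target. Where you diverge is the algorithm: the paper runs a two-stage procedure with adaptivity living entirely inside two \emph{stopping rules}---first drawing samples until $\Theta(\delta^{-4})$ pairwise collisions are seen (which yields a $(1\pm\delta)$-estimate of $\normtwo{\p}^2$ with no a priori bound and no round structure), then drawing up to a budget $M=\Theta(\eps^{-6}N^{2/3})$ fresh samples and rejecting iff more than $\eps^{-18}$ triple collisions occur, stopping early if they do. Because there is no unbounded sequence of rounds, the error analysis is a single union bound over a constant number of events. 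That buys it a clean expected sample count with no logarithmic overhead.

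Your geometric-doubling scheme with the ratio statistic $T_k=\hat{N}_3/\hat{N}_2^2$ is a genuinely different adaptive mechanism, and it is plausible it could be pushed through, but two concrete issues stand in the way. First, your certificate threshold has the wrong exponent: to resolve an additive gap of order $\eps^a$ in $T(\p)=\norm{\p}_3^3/\normtwo{\p}^4$, you need the relative variance of $T_k$ to be $O(\eps^{2a})$; since the dominant term in $\var(\hat{N}_3)/\shortexpect[\hat{N}_3]^2$ is $\Theta(1/(m\norm{\p}_3))$, this forces $m\gtrsim 1/(\eps^{2a}\norm{\p}_3)$, so the certificate should read $m_k\geq C/(\eps^{2a}\hat{N}_3^{1/3})$, not $C/(\eps^{a}\hat{N}_3^{1/3})$---with the latter the certificate fires an $\eps^{3}$ factor too early, before $T_k$ has concentrated. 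Second, and more substantively, the ``budget $2^{-(k+2)}$ per round'' union bound presupposes per-round failure probabilities that decay geometrically in $k$; plain Chebyshev on $U$-statistics gives only a constant per-round failure probability, so you would need amplification (e.g.\ median-of-means over $\Theta(k)$ repetitions per round), which inflates round $k$ to $\Theta(k\,2^k)$ samples and introduces a $\log\bigl(1/(\eps^6\norm{\p}_3)\bigr)$ overhead absent from the claimed bound. To avoid it you would need a sharper tail argument showing that early rounds cannot falsely certify (a genuine gap as written), or to replace the doubling loop with a single stopping rule as in the paper. A smaller point: computing $\hat{N}_2$ and $\hat{N}_3$ from the same batch makes the numerator and denominator of $T_k$ correlated---manageable, but the paper sidesteps this by using disjoint fresh samples in its two stages.
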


We note that if indeed $\p$ is uniform, i.e., $\p=\unif_S$ for some $S\subseteq\domain$, then, for constant $\eps$, the above complexity becomes $O\big(\abs{S}^{2/3}\big)$ -- to be compared to the $\bigTheta{\sqrt{\abs{S}}}$ sample complexity of testing whether $\p=\unif_S$ for a fixed~$S$. Our next result shows that this is not an artifact of our algorithm; namely, such a dependence is necessary, and testing the \emph{class} of uniform distributions is strictly harder than testing any specific uniform distribution.

\begin{restatable}{theorem}{theomainlb}\label{theo:main:theorem:lb}
  Fix any (non-uniform) distribution $\q$ over $\domain$, and let
  $\eps\eqdef \totalvardist{\q}{\class_U}$ be its distance to
  $\class_U$. Then, given sample access to a distribution $\p$ on
  $\domain$, distinguishing with high constant probability between (i) $\p$ is equal to $\q$ up to a
  permutation of the domain and (ii) $\p\in\class_U$, requires
  $\bigOmega{\frac{1}{\norm{\q}_3}}$ samples. In particular, an
  algorithm that tests membership in $\class_U$ with high probability
  and for any proximity parameter $\eps'\leq \eps$ requires this many samples. 
\end{restatable}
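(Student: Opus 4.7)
The plan is to exhibit two specific distributions---one in $\class_U$ and one equal to $\q$ up to a permutation---whose $m$-sample distributions are close in total variation whenever $m = \littleO{1/\norm{\q}_3}$, so that no tester can distinguish them with constant probability. First, by a standard symmetrization argument, I may assume without loss of generality that the tester is invariant under permutations of the domain: both hypothesis sets (``$\p = \q$ up to permutation'' and ``$\p \in \class_U$'') are closed under relabelings, so any tester can be replaced by its average over a uniformly random domain permutation without affecting its error. Consequently, the tester's output depends only on the \emph{fingerprint} of the sample sequence, i.e., the sorted multiset of label counts.

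Next, I take $k := \lceil 1/\norm{\q}_2^2 \rceil$ and set the ``yes'' instance to be $\unif_{[k]} \in \class_U$. This choice ensures that $\unif_{[k]}$ and $\q$ agree, up to rounding, on the pairwise collision probability $\norm{\cdot}_2^2$, since $\norm{\unif_{[k]}}_2^2 = 1/k \approx \norm{\q}_2^2$. I would then show that the fingerprints of $m$ samples from $\q$ and from $\unif_{[k]}$ are within total variation $\littleO{1}$ whenever $m = \littleO{1/\norm{\q}_3}$, which by Le~Cam's two-point method forces any tester to fail with constant probability.

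The main technical step splits into two parts. \textbf{(a)} With high probability, under both distributions, no element is seen three or more times among the $m$ samples. Indeed, the expected number of elements appearing at least thrice is at most $\binom{m}{3}\norm{\p}_3^3$: for $\p = \q$ this is $\bigO{m^3 \norm{\q}_3^3} = \littleO{1}$ by the choice of $m$, while for $\p = \unif_{[k]}$ we have $\norm{\unif_{[k]}}_3^3 = 1/k^2 \leq \norm{\q}_2^4 \leq \norm{\q}_3^3$, where the last inequality follows from Cauchy--Schwarz applied via $\norm{\q}_2^2 = \sum_i q_i^{1/2} \cdot q_i^{3/2} \leq \norm{\q}_1^{1/2}\norm{\q}_3^{3/2}$. \textbf{(b)} Conditioned on the event in (a), the fingerprint is fully determined by the number of doubletons $D$ (the rest of the $m$ samples are singletons). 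Both distributions yield the same expectation $\binom{m}{2}\norm{\q}_2^2$ for $D$ (up to the rounding in $k$), while their variances differ only by $\bigO{m^3(\norm{\q}_3^3 - \norm{\q}_2^4)} = \littleO{1}$. A Poissonization argument then makes the per-element counts independent Poissons, enabling one to bound the residual TV distance between the two distributions of $D$ via a Hellinger or $\chi^2$ comparison.

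The main obstacle will be step (b): turning the matching first moment and near-matching second moment of $D$ into a bona fide TV bound, since $D$ is a sum of dependent indicators under $\q$ and is not exactly Poisson. This should follow from Poissonizing the sample count (which costs only an additive $\bigO{1/\sqrt{m}}$ in TV), then carefully comparing the two resulting count vectors using the $\ell_3$ moment to control the residual error. The lower bound $m = \bigOmega{1/\norm{\q}_3}$ then follows from Le~Cam's lemma; the ``in particular'' conclusion is immediate, since any valid $\eps'$-tester for $\eps' \leq \eps$ must correctly reject $\q$ with high probability.
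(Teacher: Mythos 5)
Your choice of the two candidate distributions matches the paper's exactly: $\p^\no = \q$ and $\p^\yes$ uniform on roughly $1/\normtwo{\q}^2$ elements, so that the $\lp[1]$ and $\lp[2]$ moments coincide and all higher moments are negligible at the critical sample size. Your symmetrization observation---that one may assume the tester looks only at the fingerprint---is also the correct reduction and is implicit in the paper's restriction to \emph{symmetric} properties. Where you diverge is that the paper invokes Valiant's Wishful Thinking Theorem (\autoref{theo:valiant:wishful}) as a black box, reducing the proof to a short computation: check that $\sum_{j\geq 2}\abs{m^\yes(j)-m^\no(j)}/\sqrt{1+\max(\cdots)}<1/24$, which follows from the H\"older-type inequality $\normtwo{\q}^{2(j-1)}\leq\norm{\q}_j^j$ (\autoref{fact:ineq:moments:holder}) together with monotonicity of $\lp[p]$-norms. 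You instead attempt to re-derive a total-variation bound between the two fingerprint distributions from first principles.

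The genuine gap is your step (b). Matching $\Ex[D]$ and showing $\abs{\var[D^\yes]-\var[D^\no]}=\littleO{1}$ does \emph{not}, on its own, give a total-variation bound between the two laws of $D$ (let alone the full fingerprints): a random variable can have any prescribed first two moments and still be far in TV from a Poisson with the same mean, and conditioning on ``no triple collision'' only tells you that the fingerprint is summarized by a single statistic, not what its law is. Your plan---Poissonize, then run a Hellinger or $\chi^2$ comparison ``using the $\lp[3]$ moment to control the residual error''---is pointing at exactly the machinery inside Valiant's proof: Poissonize the sample count, approximate the joint law of the fingerprint entries $(\Phi_1,\Phi_2,\dots)$ by an independent-Poisson product via a multivariate Poisson approximation (Roos's theorem), and compare the resulting products. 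None of that is carried out, and it is not a small step; the approximation of the fingerprint by a product of Poissons and the control of the residual error in terms of moment discrepancies is the technical core of~\cite{Valiant:11}. You should either cite the Wishful Thinking Theorem (in which case your argument essentially collapses to the paper's, since you already have the right hard pair and the right moment bounds) or commit to reproving the fingerprint-closeness result, which is a considerably longer undertaking than what you have sketched.
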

It is worth discussing the above statement in detail, as its interpretation can be slightly confusing. Specifically, it does \emph{not} state that testing identity to any fixed, known distribution $\p$ requires $\bigOmega{{1}/{\norm{\p}_3}}$ (indeed, by the results of~\cite{VV:14,BCG:16}, such a statement would be false). What is stated is essentially that, even given the full description of $\p$, it is hard to distinguish between $\p$ and a uniform distribution, \emph{after relabeling of the elements of the domain}. Since the class of uniform distributions is invariant by such permutations, the last part of the theorem follows.

%Given our assumption that $\domain$ be discrete, we can without loss of generality identify it to $\N$.

\subsection{Overview and Techniques}

The key intuition and driving idea of both our upper and lower bounds is the observation that, by very definition of the problem, there is no structure nor ordering of the domain to leverage. That is, the class of uniform distributions over $\domain$ is a ``symmetric property'' (broadly speaking, the actual labeling of the elements of the domain is irrelevant), and the domain itself can and should be thought of as a set of arbitrary points with no algebraic structure. Given this state of affairs, an algorithm should not be able to do much more than counting \emph{collisions}, that is the number of pairs, or triples, or more generally $k$-tuples of samples which happen to ``hit'' the same domain element.

Equivalently, these collision counts correspond to the \emph{moments} (that is, $\lp[p]$-norms) of the distribution; following a line of works on symmetric properties of distributions (\cite{GRexp:00,RRSS:09,Valiant:11,ValiantValiant:11}, to cite a few), we thus need to, and can only, focus on estimating these moments. To relate this to our property $\class_U$, we first need a simple connection between $\lp[p]$ norms and uniformity of a distribution. However, while getting an exact characterization is not difficult (\autoref{lemma:equal:moments:characterize:uniform}), we are interested in a \emph{robust} characterization, in order to derive a correspondence between approximate equality between $\lp[p]$ norms and distance to uniformity. This is what we obtain in~\autoref{lem:smallnormtouni}: roughly speaking, if $\normtwo{\p}^4\approx\norm{\p}_3^3$ then $\p$ must be close to a uniform distribution on $1/\normtwo{\p}^2$ elements.

This in turn allows us to design and analyze a simple and clean testing algorithm, which works in two stages: (i) estimate $\normtwo{\p}^2$ to sufficient accuracy; (ii) using this estimate, take enough samples to estimate $\norm{\p}_3^3$ as well; and \accept if and only if $\normtwo{\p}^4\approx\norm{\p}_3^3$.

Turning to the lower bound, the idea is once again to only use the available information: namely, if all that \emph{should} matter are the $\lp[p]$-norms of the distribution, then two distributions with similar low-order norms \emph{should} be hard to distinguish; so it would suffice to come up with a pair of uniform and far-from-uniform distributions $\p^\yes,\p^\no$ with similar moments to establish our lower bound. Fortunately, this intuition~--~already present in~\cite{RRSS:09}~--~was formalized and developed in an earlier work of Paul Valiant~\cite{Valiant:11}, which we thus can leverage for our purpose. Given this ``Wishful Thinking Theorem'' (see~\autoref{theo:valiant:wishful}), what remains is to upper bound the discrepancy of the moments of our two candidate distributions $\p^\yes,\p^\no$ to show that some specific quantity is very small. Luckily, this last step also can be derived from the aforementioned robust characterization,~\autoref{lem:smallnormtouni}.

\subsection{Organization}

After recalling some useful notation and results in~\autoref{sec:preliminaries}, we establish our upper bound (\autoref{theo:main:theorem:ub}) in~\autoref{sec:upperbound}. \autoref{sec:lowerbound} is then dedicated to the proof of our lower bound, \autoref{theo:main:theorem:lb}.

\section{Preliminaries}\label{sec:preliminaries}
\subsection{Definitions and notation}
All throughout this paper, we write $\distribs{\domain}$ for the set of discrete probability distributions over domain $\domain$, i.e. the set of all real-valued functions $\p\colon\domain\to[0,1]$ such that $\sum_{x\in\domain} \p(x) =1$. Considering a probability distribution as the vector of its probability mass function (pmf), we write $\norm{\p}_r$ for its $\lp[r]$-norm, for any $r\in[1,\infty]$. A \emph{property} of distributions over $\domain$ is then a subset $\property\subseteq \distribs{\domain}$, comprising all distributions that have the property.

As standard in distribution testing, we will measure the distance between two distributions $\p_1, \p_2$ on $\domain$ by their \emph{total variation distance}
\[
\totalvardist{\p_1}{\p_2 } \eqdef \frac{1}{2} \normone{\p_1 - \p_2} = \max_{S \subseteq \domain} (\p_1(S)-\p_2(S))
\]
which takes value in $[0,1]$. (This metric is sometimes referred to as \emph{statistical distance}). Given a property $\property$ and a distribution $\p\subseteq\distribs{\domain}$, we then write $\totalvardist{\p}{\property}\eqdef \inf_{\q\in\property} \totalvardist{\p}{\q}$ for the distance of $\p$ to $\property$.

Finally, recall that a \emph{testing algorithm} for a fixed property $\property$ is a randomized algorithm $\Tester$ which takes as input a proximity parameter $\eps\in(0,1]$, and is granted access to independent samples from an unknown distribution $\p$:
\begin{enumerate}%[(i)]
\item if $\p \in \property$, the algorithm outputs \accept with probability at least $2/3$;
\item if $\totalvardist{\p}{\p^\prime} \geq \eps$ for every $\p^\prime\in\property$, it outputs \reject with probability at least $2/3$.
\end{enumerate}
That is, \Tester must accept if the unknown distribution has the property, and reject if it is \emph{\eps-far} from having it. The  \emph{sample complexity} of the algorithm is the number of samples it draws from the distribution in the worst case.

\subsection{Useful results from previous work}

We will heavily rely, for our lower bound, on the ``Wishful Thinking Theorem'' due to Paul Valiant~\cite{Valiant:11}, which applies to testing symmetric properties of distributions (that is, properties that are invariant under relabeling of the domain, as $\class_U$ happens to be). Intuitively, this theorem ensures that ``if the low-degree moments ($\lp[p]$ norms) of two distributions match, then these distributions (up to relabeling) are hard to distinguish.''

\begin{theorem}[Wishful Thinking Theorem {\cite[Theorem 4.10]{Valiant:11}, restated}]\label{theo:valiant:wishful}
  Given a positive integer $k$ and two distributions $\p^\yes,\p^\no$, it is impossible to test in $k$ samples any symmetric property that holds for $\p^\yes$ and does not hold for $\p^\no$, provided that following conditions hold:
  \begin{itemize}
    \item $\norminf{\p^\yes},\norminf{\p^\no} \leq \frac{1}{500k}$;
    \item letting $m^\yes$, $m^\no$ be the $k$-based moments of $\p^\yes,\p^\no$ (defined below),
      \[
         \sum_{j=2}^\infty \frac{\abs{m^{\yes}(j)-m^{\no}(j)}}{\sqrt{1+\max(m^{\yes}(j),m^{\no}(j))}} < \frac{1}{24},
      \]
      where $m^{\yes}(j) \eqdef k^j\norm{\p^\yes}_j^j$, $m^{\no}(j) \eqdef k^j\norm{\p^\no}_j^j$ for $j\geq 0$.
  \end{itemize}
\end{theorem}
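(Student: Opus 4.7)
The plan is to follow the standard Poissonization plus symmetric-statistic route. First I would replace the $k$-sample model by a Poissonized one in which $k' \sim \poisson{k}$ samples are drawn; because $\norminf{\p^\yes},\norminf{\p^\no} \leq 1/(500k)$ is so small, this change costs at most a small constant in total variation, and it has the virtue of making the per-element counts $X_i$ of each domain element $i$ into \emph{independent} $\poisson{k\p(i)}$ random variables. Since the target property is symmetric (invariant under relabelings of $\domain$), an exchangeability argument lets me assume the tester depends on the samples only through the fingerprint $F=(F_1,F_2,\dots)$ defined by $F_j \eqdef \sum_i \indic{X_i=j}$. The task thus reduces to lower bounding $\totalvardist{F^\yes}{F^\no}$.

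Noting that $m(j) = k^j\norm{\p}_j^j = \sum_i (k\p(i))^j$ and that $k\p(i)\le 1/500$, the expansion $e^{-k\p(i)} = 1 - O(k\p(i))$ yields
\[
 \Ex[F_j] \;=\; \sum_i \frac{(k\p(i))^j}{j!}\,e^{-k\p(i)} \;=\; \frac{m(j)}{j!}\bigl(1 \pm O(1/500)\bigr),
\]
and each $F_j$ is a sum of independent indicators whose distribution is essentially Poisson with mean of order $m(j)/j!$, and hence standard deviation of order $\sqrt{1+m(j)}$ (the $``1+\,$'' handling the small-mean regime). The heart of the proof is then to establish the central inequality
\[
 \totalvardist{F^\yes}{F^\no} \;\leq\; C \sum_{j\geq 2} \frac{\abs{m^\yes(j)-m^\no(j)}}{\sqrt{1+\max\bigl(m^\yes(j),m^\no(j)\bigr)}}
\]
for an absolute constant $C$. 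With $C\leq 8$, combining this with the hypothesis that the right-hand sum is below $1/24$ gives $\totalvardist{F^\yes}{F^\no} < 1/3$, which by the data-processing interpretation of total variation rules out any tester achieving success probability $2/3$. The $j=1$ term drops out of the bound because $m^\yes(1)=m^\no(1)=k$ (both distributions have total mass one), which is why the sum starts at $j=2$.

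The main obstacle is precisely this last inequality, because the right-hand side only sees the symmetric functions $m(j)$ and so the bound must be derived in an \emph{intrinsically} moment-based way. I would obtain it by working with the chi-squared divergence of the two product-Poisson measures on $(X_i)_i$: the divergence factorizes across coordinates and admits a clean expansion in the Charlier (Poisson--Hermite) polynomial basis, whose orthogonality relations extract precisely the combinations $\sum_i (k\p(i))^j = m(j)$ at order $j$, with normalizations matching the Poisson variances $\sqrt{1+m(j)}$. Projecting down to the fingerprint and applying Pinsker's inequality (together with a Cauchy--Schwarz on the expansion) yields the stated bound. Throughout, the $\norminf{\cdot}\leq 1/(500k)$ hypothesis is what legitimizes both the cheap Poissonization step and the linearization of the exponential above; it is essentially tight, since without such a uniform upper bound a single heavy element could pull the two fingerprint distributions apart regardless of their moments.
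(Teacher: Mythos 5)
First, a point of comparison: the paper does not prove this statement at all --- it is imported verbatim (in weakened form) from \cite[Theorem 4.10]{Valiant:11} and used as a black box, so the benchmark is Valiant's original argument. Your high-level skeleton does match that argument: Poissonization, reduction by symmetry to the fingerprint $F=(F_1,F_2,\dots)$, and a bound on $\totalvardist{F^\yes}{F^\no}$ of exactly the displayed moment-discrepancy form, from which the $2/3$-success impossibility follows by the usual total-variation argument.

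The gap is in the only step that carries the real content, namely your derivation of the central inequality. You propose to bound the fingerprint distance by the chi-squared divergence between the two \emph{labeled} product-Poisson measures on the counts $(X_i)_i$ and then ``project down'' to the fingerprint. Data processing does give $\totalvardist{F^\yes}{F^\no}\leq \totalvardist{(X_i^\yes)_i}{(X_i^\no)_i}$, but this bound is vacuous here: the labeled divergence is generically huge --- infinite as soon as some element lies in the support of one distribution and not the other (which is the typical situation, e.g.\ $\p^\yes$ uniform on a set $S$ versus $\p^\no=\q$), and even with common support it scales like $k$ times a labeled chi-square distance between $\p^\yes$ and $\p^\no$, which is precisely what makes \emph{labeled} identity testing easy with far fewer than $k$ samples. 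The whole point of the theorem is that closeness appears only after the labels are forgotten: two relabelings of the same distribution have identical fingerprints yet arbitrarily large (even infinite) labeled divergence, so no Charlier-expansion/Pinsker argument at the labeled level, followed by data processing, can produce the stated bound. Valiant's proof instead works directly with the fingerprint laws: after Poissonization the fingerprint is a sum over domain elements of independent vector-valued indicators (a ``generalized multinomial'' law), which he approximates by a multivariate Poisson law via Roos's theorem --- this is where the hypothesis $\norminf{\p}\leq\frac{1}{500k}$ does its real work, beyond making Poissonization cheap --- and he then bounds the distance between the two approximating multivariate Poisson laws in terms of the discrepancies of their intensity vectors, which are (up to the $1/j!$ normalization, a detail your sketch also elides in the $\sqrt{1+m(j)}$ denominators) the $k$-based moments $m(j)$. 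Without such an argument carried out at the fingerprint level, your outline does not establish the central inequality, and hence does not prove the theorem.
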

\noindent (We observe that we only reproduced here one of the three sufficient conditions given in the original, more general theorem; as this will be the only one we need.)

\subsection{Some structural results}

We here state and establish some simple yet useful results. The first relates uniformity of a distribution to the $\lp[p]$-norms of its probability mass function, while the second provides inequalities between these norms.

\begin{lemma}\label{lemma:equal:moments:characterize:uniform}
  Let $\p\in\distribs{\domain}$. Then, $\normtwo{\p}^4=\norm{\p}_3^3$ if and only if $\p\in\class_U$.
\end{lemma}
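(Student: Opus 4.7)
The plan is to interpret the claimed identity as the equality case of a classical inequality. The forward direction (from $\p\in\class_U$ to $\normtwo{\p}^4=\norm{\p}_3^3$) is a direct computation: if $\p=\unif_S$ with $\abs{S}=k$, then $\normtwo{\p}^2=k\cdot(1/k)^2=1/k$ and $\norm{\p}_3^3=k\cdot(1/k)^3=1/k^2$, so both sides equal $1/k^2$. I would dispatch this in one line at the start of the proof.

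For the nontrivial direction, the idea is to apply Cauchy--Schwarz to the factorization $p_i^2=p_i^{3/2}\cdot p_i^{1/2}$. Setting $a_i\eqdef \p(i)^{3/2}$ and $b_i\eqdef \p(i)^{1/2}$, one has
\[
  \normtwo{\p}^4 \;=\; \Bigl(\sum_{i\in\domain} a_i b_i\Bigr)^{2} \;\le\; \Bigl(\sum_{i} a_i^2\Bigr)\Bigl(\sum_i b_i^2\Bigr) \;=\; \norm{\p}_3^3 \cdot \normone{\p} \;=\; \norm{\p}_3^3,
\]
using $\normone{\p}=1$. The hypothesis $\normtwo{\p}^4=\norm{\p}_3^3$ therefore forces Cauchy--Schwarz to be tight.

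Next I would invoke the standard characterization of the equality case: the vectors $(a_i)$ and $(b_i)$ must be proportional on the indices where both are nonzero. Since $a_i$ and $b_i$ vanish on exactly the same indices (namely those outside the support of $\p$), proportionality on $\supp{\p}$ means there exists $\lambda>0$ with $\p(i)^{3/2}=\lambda\,\p(i)^{1/2}$ for every $i\in\supp{\p}$; equivalently, $\p(i)=\lambda$ for every such $i$. Thus $\p$ is constant on its support, i.e., $\p=\unif_{\supp{\p}}\in\class_U$, which completes the proof.

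I do not anticipate any real obstacle here; the only subtle point is being careful about the support (so that the ``proportional vectors'' statement genuinely rules out nonuniform mass functions rather than only comparing them on a smaller index set), and choosing the right Cauchy--Schwarz split $p_i^2 = p_i^{3/2}\cdot p_i^{1/2}$ so that the $\lp[1]$-normalization $\normone{\p}=1$ cleanly eliminates one of the two factors.
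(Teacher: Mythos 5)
Your proposal is correct and matches the paper's proof essentially step for step: both verify the forward direction by direct computation and establish the converse via Cauchy--Schwarz applied to the split $p_i^2 = p_i^{3/2}\cdot p_i^{1/2}$, then invoke the equality case to conclude that $\p$ is constant on its support. Your handling of the support (noting that $(a_i)$ and $(b_i)$ vanish on the same index set before invoking proportionality) is if anything a touch more careful than the paper's phrasing, but the argument is the same.
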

\begin{proof}
  If $\p\in\class_U$, it is immediate to see that $\normtwo{\p}^4=\norm{\p}_3^3$. We thus consider the converse implication. By the Cauchy--Schwarz inequality, 
  \[
      \normtwo{\p}^2 = \sum_{i\in\domain} \p_i^2 
      \leq \left( \sum_{i\in\domain} \left(\p_i^{3/2}\right)^2 \right)^{1/2}\left( \sum_{i\in\domain} \left(\p_i^{1/2}\right)^2 \right)^{1/2}
      =  \left( \sum_{i\in\domain} \p_i^3 \right)^{1/2}\left( \sum_{i\in\domain} \p_i \right)^{1/2} = \norm{\p}_3^{3/2}\cdot 1
  \]
  with equality if, and only if, $(\p_i^{3/2})_{i\in\domain}$ and $(\p_i^{1/2})_{i\in\domain}$ are linearly dependent. Thus, $\normtwo{\p}^4=\norm{\p}_3^3$ implies that there exist non-zero $\alpha,\beta\in\R$ such that $\alpha \p_i^{3/2} = \beta \p_i^{1/2}$ for all $i\in\Omega$, or equivalently that $\p_i \in \{0,\frac{\beta}{\alpha}\}$ for all $i\in\domain$. This, in turn, implies that $\p$ is uniform on a subset of $\frac{\alpha}{\beta}$ elements.
\end{proof}

  \begin{fact}\label{fact:ineq:moments:holder}
    For any vector $x\in\R^{\N}$ such that $\normone{x} <\infty$, we have
    \[
        \normtwo{x}^{2(j-1)} \leq \normone{x}^{j-2}\norm{x}_j^j,
    \]
    for all $j\geq 2$. In particular, for any distribution $\p\in\distribs{\domain}$, we have  $\normtwo{\p}^{2(j-1)} \leq \norm{\p}_j^j$ for all $j\geq 2$ (and, thus, for instance, 
    $\normtwo{\p}^{4} \leq \norm{\p}_3^3$).
  \end{fact}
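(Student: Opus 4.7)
The statement is a standard interpolation/log-convexity inequality between $\lp$-norms, and the natural tool is H\"older's inequality. The plan is to write $|x_i|^2 = |x_i|^{s}\cdot |x_i|^{2-s}$ for an exponent $s$ chosen so that one factor, when summed, gives $\norm{x}_j^j$ and the other gives $\norm{x}_1$. Concretely, I would apply H\"older with conjugate exponents $p=j-1$ and $q=(j-1)/(j-2)$ (noting $j\geq 2$, with the case $j=2$ being trivially an equality since $\norm{x}_1^{0}=1$).

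More precisely, taking $s = j/(j-1)$, one checks $sp = j$ and $(2-s)q = 1$, so H\"older yields
\[
  \normtwo{x}^2 = \sum_i |x_i|^{s}\cdot |x_i|^{2-s} \leq \Bigl(\sum_i |x_i|^j\Bigr)^{1/(j-1)} \Bigl(\sum_i |x_i|\Bigr)^{(j-2)/(j-1)} = \norm{x}_j^{j/(j-1)}\cdot \normone{x}^{(j-2)/(j-1)}.
\]
Raising both sides to the power $j-1$ produces exactly $\normtwo{x}^{2(j-1)} \leq \normone{x}^{j-2}\norm{x}_j^{j}$, as desired. For the ``in particular'' clause, simply plug in a distribution $\p$, for which $\normone{\p}=1$, to obtain $\normtwo{\p}^{2(j-1)} \leq \norm{\p}_j^j$, and specialize to $j=3$.

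There is essentially no obstacle here: the only subtlety is the right choice of exponents to make the H\"older factorization collapse to $\lp[1]$ and $\lp[j]$, after which the inequality is mechanical. If one prefers, an equivalent derivation is to invoke the log-convexity of $r\mapsto \log \norm{x}_r^r$ (or of $r\mapsto \log\norm{x}_{1/r}$), writing $2 = \theta\cdot 1 + (1-\theta)\cdot j$ with $\theta=(j-2)/(j-1)$ and applying the interpolation inequality $\normtwo{x}^2 \leq \normone{x}^{\theta}\norm{x}_j^{j(1-\theta)}$; this gives the same bound after simplifying the exponents.
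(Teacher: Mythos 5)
Your proof is correct and is essentially identical to the paper's: both split $|x_i|^2 = |x_i|^{j/(j-1)}\cdot|x_i|^{(j-2)/(j-1)}$ and apply H\"older with conjugate exponents $j-1$ and $(j-1)/(j-2)$, then raise to the power $j-1$. The alternative log-convexity phrasing you mention at the end is just a repackaging of the same interpolation step.
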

  \begin{proof}
  The inequality is trivially true for $j=2$, and, so, we henceforth assume $j\geq 3$. Let $x\in\R^{\N}$ be such a vector: we wish to show that $\left(\sum_{i=0}^\infty x_i^2\right)^{j-1} \leq \left(\sum_{i=0}^\infty \abs{x_i}\right)^{j-2}\left(\sum_{i=0}^\infty \abs{x_i}^j\right)$, or equivalently $\sum_{i=0}^\infty x_i^2 \leq \left(\sum_{i=0}^\infty \abs{x_i}\right)^\frac{j-2}{j-1}\left(\sum_{i=0}^\infty \abs{x_i}^j\right)^\frac{1}{j-1}$. Set $p'\eqdef \frac{j-1}{j-2}$, and $q'\eqdef j-1$ so that $p',q'\geq 1$ with $\frac{1}{p'}+\frac{1}{q'}=1$. Observing that $\abs{x_i}^2=\abs{x_i}^{\frac{j-2}{j-1}}\abs{x_i}^{\frac{j}{j-1}}$, we then apply H\"older's inequality:
\begin{align*}
      \sum_{i=0}^\infty \abs{x_i}^2 & = \sum_{i=0}^\infty\abs{x_i}^\frac{1}{p'}\abs{x_i}^\frac{j}{q'}\\
  &\leq \left(\sum_{i=0}^\infty \abs{x_i}^\frac{p'}{p'}\right)^\frac{1}{p'}\left(\sum_{i=0}^\infty \abs{x_i}^\frac{jq'}{q'}\right)^\frac{1}{q'}\\
      &= \left(\sum_{i=0}^\infty \abs{x_i}\right)^\frac{j-2}{j-1}\left(\sum_{i=0}^\infty \abs{x_i}^j\right)^\frac{1}{j-1}
    \end{align*}
concluding the proof.
  \end{proof}

\section{The Upper Bound}\label{sec:upperbound}
Our algorithm for testing uniformity first estimates the $\lp[2]$
norm of the input distribution and uses this estimate to obtain a
surrogate value for the size of the support set for the
distribution. In the case the input distribution is a uniform
distribution, the $\lp[2]$ norm estimate indeed provides a good
approximation to the size of the support set. Our algorithm for the
$\lp[2]$ norm estimation is presented in the following section,
followed by our algorithm for testing uniformity.

\subsection{Estimating the $\lp[2]$ norm of a distribution}
In this section, we present an algorithm that, given independent
samples from a distribution $\p$ over $\N$, estimates
$\twonorms{\p}$. Note that a similar result was presented in Batu et
al.~\cite{BFRSW:13} in the case when the size of the domain is bounded
and known to the algorithm. Furthermore, an algorithm based on the
same ideas have been presented by Batu et al.~\cite{BDKR:05} to
estimate the entropy of a distribution that is uniform on a subset of
its domain. The algorithm is presented below in~\autoref{alg:twonorm}.

\begin{algorithm}[H]
  \caption{Estimating the $\lp[2]$ norm of a distribution from samples}
  \label{alg:twonorm}
  \begin{algorithmic}[1]
  \Procedure{Estimate-$\lp[2]$-norm}{$\p,\eps$}
  \State $k\gets \lceil \frac{C}{\eps^4}\rceil$ \Comment{$C=6500$}
  \State Keep taking samples from $\p$ until $k$ $2$-collisions are
  observed.
  \State Let $m$ be the number of samples taken.
  \State \Return $\frac{k}{\binom{m}2}$
  \EndProcedure  
  \end{algorithmic}
\end{algorithm}

\begin{lemma} \label{lem:normest}
  Algorithm~\textsc{Estimate-$\lp[2]$-norm}, given independent samples
  from a distribution $\p$ over $\N$ and $0<\eps<\frac12$, outputs
  a value $\gamma$ such that
\begin{equation}\label{eq:l2:estimate:guarantee}
 (1-\eps)\cdot\twonorms{\p} \leq \gamma \leq (1+\eps)\cdot
\twonorms{\p},
\end{equation}
with probability at least $3/4$.  Whenever the algorithm produces an
estimate satisfying~\eqref{eq:l2:estimate:guarantee} above, the number of samples taken by the algorithm is
$\Theta(\frac{1}{\eps^2\twonorm{\p}})$. Moreover, the algorithm takes $O(\frac{1}{\eps^2\twonorm{\p}})$ samples in expectation.
\end{lemma}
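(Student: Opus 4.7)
Write $c = \twonorms{\p}$, and let $S_1, S_2, \ldots$ denote independent samples from $\p$. For each $m \ge 2$, define $X_m = \sum_{1 \le i < j \le m} \mathbf{1}[S_i = S_j]$, the number of $2$-collisions in the first $m$ samples. This is a $U$-statistic with $\Ex[X_m] = \binom{m}{2} c$. My first step is to compute the variance by decomposing $\operatorname{Cov}(\mathbf{1}[S_i = S_j], \mathbf{1}[S_k = S_l])$ according to the overlap $\lvert\{i,j\} \cap \{k,l\}\rvert \in \{0, 1, 2\}$, which gives
\[
  \var[X_m] \le \binom{m}{2} c + 6 \binom{m}{3} \norm{\p}_3^3.
\]
I then use the elementary bound $\norm{\p}_3^3 = \sum_x \p(x) \cdot \p(x)^2 \le \norminf{\p} \cdot c \le c^{3/2}$ (since $\norminf{\p} \le \twonorm{\p}$) to make this explicit in $c$ alone.

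For the high-probability claim, let $\tau = \min\{m : X_m \ge k\}$ be the stopping time, so the algorithm returns $\gamma = k/\binom{\tau}{2}$, which satisfies \eqref{eq:l2:estimate:guarantee} iff $\binom{\tau}{2} c \in [k/(1+\eps), k/(1-\eps)]$. Choose integer thresholds $m^-$ and $m^+$ with $\binom{m^-}{2}c \approx k/(1+\eps)$ and $\binom{m^+}{2}c \approx k/(1-\eps)$. The two failure modes rewrite as $\{X_{m^-} \ge k\}$ (stopped too early) and $\{X_{m^+} < k\}$ (stopped too late), each corresponding to a relative deviation of order $\eps$ from the mean $\Ex[X_{m^\pm}]$. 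Chebyshev's inequality combined with the variance bound above then gives, for each side, a failure probability at most
\[
  \bigO{\frac{1}{\eps^2 k} + \frac{1}{\eps^2 \sqrt{k}}}.
\]
Substituting $k = \lceil C/\eps^4 \rceil$ turns this into $\bigO{\eps^2/C + 1/\sqrt{C}}$, and picking $C = 6500$ makes each side at most $1/8$, so a union bound delivers overall success probability at least $3/4$. Conditional on success, $\binom{\tau}{2}c = \bigTheta{k}$ immediately gives $\tau = \bigTheta{\sqrt{k/c}} = \bigTheta{1/(\eps^2 \twonorm{\p})}$, which is the first part of the sample-complexity claim.

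The main obstacle is the \emph{expected} sample complexity bound. Decomposing $\Ex[\tau] = \sum_{m \ge 0} \Pr[X_m < k]$, the low range $m \le 2\sqrt{2k/c}$ contributes $\bigO{\sqrt{k/c}}$ trivially. For $m \gg \sqrt{k/c}$, however, the Chebyshev bound used above yields only $\Pr[X_m < k] = \bigO{1/(m\sqrt{c})}$, whose sum over $m$ diverges. To close this gap, I would upgrade to Bernstein's inequality applied to the first-order Hoeffding projection $T_m = \sum_{i=1}^m (\p(S_i) - c)$ of $X_m$: this is a sum of i.i.d.\ centered random variables bounded by $\norminf{\p} \le \sqrt{c}$, each with variance at most $\norm{\p}_3^3 - c^2 \le c^{3/2}$. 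The sub-Gaussian regime of Bernstein then applies at all the relevant scales and gives an exponential tail $\Pr[\tau > m] \le \exp(-\bigOmega{\sqrt{k}})$ whenever $\Ex[X_m] \ge 2k$, making the tail contribution to $\Ex[\tau]$ negligible compared with $\sqrt{k/c}$. This sharper concentration step---together with an analogous bound on the second-order residual of the Hoeffding decomposition---is the only genuinely non-routine component of the proof.
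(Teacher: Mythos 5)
Your treatment of the high-probability guarantee mirrors the paper: same collision count $S_m$, same variance decomposition by overlap size, same Chebyshev bound on the stopping time at the two thresholds, same arithmetic with $C=6500$. So that part is fine and requires no comment.

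For the expected sample complexity you take a genuinely different route. The paper does \emph{not} try to sharpen the concentration: instead it splits into two regimes. When $\norminf{\p}=\Omega(\twonorm{\p})$ it stochastically dominates the stopping time by the (negative-binomial) time needed to draw the single heaviest element $\Theta(\sqrt{k})$ times, which already forces $k$ collisions. When $\norminf{\p}=o(\twonorm{\p})$ it simply cites Camarri--Pitman (Theorem~4 of~\cite{CP:00}) for the asymptotic $\Ex[M]\sim C_k/\twonorm{\p}$. Your Hoeffding-projection-plus-Bernstein plan is cleaner in the sense that it is uniform (no case split, no asymptotics on an implicit sequence of distributions, no external citation). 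It is, however, the part where you need to be more careful, and there is one concrete misstep: you assert a tail bound $\Pr[\tau>m]\le\exp(-\Omega(\sqrt{k}))$ ``whenever $\Ex[X_m]\ge 2k$.'' That bound is constant in $m$, so $\sum_m \Pr[\tau>m]$ over that infinite range would still diverge -- the very problem you correctly identified with Chebyshev. What Bernstein actually gives on the linear projection (with deviation $s\approx m\twonorms{\p}$, variance proxy $m\norm{\p}_3^3\le m\twonorm{\p}^3$, and range $\norminf{\p}\le\twonorm{\p}$) is a bound of the form $\exp(-\Omega(m\twonorm{\p}))$, which decays geometrically in $m$ and makes the tail sum $O(\exp(-\Omega(\sqrt{k}))/\twonorm{\p})$, genuinely negligible next to $\sqrt{k}/\twonorm{\p}$. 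You should state it that way. Also, calling the bound on the degenerate (second-order) Hoeffding component ``analogous'' oversells it: Bernstein does not apply off the shelf to a degenerate $U$-statistic. But you do not need anything exponential there -- Chebyshev on the residual, whose variance is $O(m^2\twonorms{\p})$, gives $\Pr[\,|R_m|>\binom{m}{2}\twonorms{\p}/4\,]=O(1/(m^2\twonorms{\p}))$, and that already sums to $O(1/(\sqrt{k}\twonorm{\p}))$, negligible. With those two corrections -- an $m$-dependent exponent and Chebyshev (not Bernstein) on the degenerate part -- your argument closes and gives a self-contained alternative to the paper's case analysis and citation.
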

\begin{proof}
Let $M$ be the random variable that denotes the number of samples that
were taken by the algorithm until $k$ pairwise collisions are observed. We will
show that, with constant probability, $M$ is close to its expected
value nearly $\sqrt{k}/\twonorm{\p}$.

Consider a set of $m$ samples from $\p$. For $1\le i < j \le m$, let
$X_{ij}$ be an indicator random variable denoting a collision between
$i$th and $j$th samples. Let $S_m=\sum_{1\le i < j \le m} X_{ij}$ be
the total number of collisions among the samples.    

For any $i<j$, $\Ex[X_{ij}]=\twonorms{\p}$. Therefore,
$\Ex[S_m]=\binom{m}{2}\cdot\twonorms{\p}$. We will also need an upper
bound on the variance $\var[S_m]$ to show that the $k$ collisions are
not observed too early or too late.
\[
 \Ex\left[ S_m^2 \right] = \Ex\left[ \left(\sum_{i<j}X_{ij}\right)\left(\sum_{i'<j'}X_{i'j'}\right)\right] = \sum_{i<j, i'<j'} \Ex[X_{ij}X_{i'j'}].
\]
The terms of the last summation above can be grouped according to the
cardinality of the set $\{i,j,i',j'\}$.
\begin{itemize}
\item If $|\{i,j,i',j'\}|=2$, then
  $\Ex[X_{ij}X_{i'j'}]=\Ex[X_{ij}]=\twonorms{\p}$. There are
  $\binom{m}{2}$ such terms.
\item If $|\{i,j,i',j'\}|=3$, then 
  $\Ex[X_{ij}X_{i'j'}]=\Ex[X_{ij}X_{ij'}]=\threenormc{\p}$. There are
$6\binom{m}{3}$ such terms.
\item If $|\{i,j,i',j'\}|=4$, then
  $\Ex[X_{ij}X_{i'j'}]=\Ex[X_{ij}]\Ex[X_{i'j'}]=\twonorm{\p}^4$. There are
  $6\binom{m}{4}$ such terms.
\end{itemize}
Hence, we can bound the variance of $S_m$ as follows.
\begin{align*}
\var[S_m] &= \Ex[S_m^2] - \Ex[S_m]^2 \\
&= \binom{m}{2} \cdot\twonorms{\p} + 6\binom{m}{3}\cdot\threenormc{\p} +
6\binom{m}{4}\cdot \twonorm{\p}^4 - \left( \binom{m}{2}
  \cdot\twonorms{\p}\right)^2  \\
&= \binom{m}{2} \cdot\twonorms{\p} + 2m\cdot \threenormc{\p} +
  (m^3-3m^2)\cdot (\threenormc{\p} - \twonorm{\p}^4)\\
&\le \binom{m}{2} \cdot\twonorms{\p} + m^3 \cdot \twonorm{\p}^3,
\end{align*}
where the inequality arises from $\threenorm{\p}\le\twonorm{\p}$.

The probability that the output of the algorithm is less than
$ (1-\eps)\cdot\twonorms{\p}$ (that is, an underestimation) is
bounded from above by the probability of the random variable $M$
taking a value $m$ such that
$(1-\eps)\binom{m}{2}\cdot \twonorms{\p} > k$. Analogously, the
probability of an overestimation is bounded above by the probability
of the random variable $M$ taking a value $m$ such that
$(1+\eps)\binom{m}{2}\cdot \twonorms{\p} < k$.

Let $m$ be the smallest integer such that
$(1+\eps)\binom{m}{2}\cdot \twonorms{\p} \geq k$, so that $(1+\eps)\binom{m-1}{2}\cdot \twonorms{\p} < k$. Then,
\begin{align*}
  \Pr\left[\text{overestimation}\right] &= \Pr[M< m] = \Pr[ \exists \ell \leq m-1,\  S_\ell \geq k]
  = \Pr\left[S_{m-1} \geq k\right] \\
&= \Pr\left[S_{m-1} - \Ex[S_{m-1}] \geq k - \binom{m-1}{2}\cdot \twonorms{\p}\right] \\
&\le \Pr\left[\left| S_{m-1} - \Ex[S_{m-1}]\right| > \eps\cdot\binom{m-1}{2}\cdot  \twonorms{\p}\right]\\
&\le \frac{\var[S_{m-1}]}{\left(\eps\cdot\binom{m-1}{2}\cdot  \twonorms{\p} \right)^2} \tag{Chebyshev's inequality}\\
&\le \frac{\binom{m-1}{2} \cdot\twonorms{\p} + (m-1)^3 \cdot \twonorm{\p}^3}{\eps^2 \cdot\binom{m-1}{2}^2\cdot \twonorm{\p}^4}\\
&\le\frac{1}{\eps^2}\left(\frac{1}{\binom{m-1}{2}\cdot \twonorms{\p}}+ \frac{9}{(m-1)\cdot \twonorm{\p}} \right)\\
&=\frac{1}{\eps^2}\left(\frac{m}{m-2}\frac{1}{\binom{m}{2}\cdot \twonorms{\p}}+ \frac{m}{m-1}\frac{9}{m\cdot \twonorm{\p}} \right)\\
&\leq\frac{1}{\eps^2}\left(\frac{10}{8}\frac{1}{\binom{m}{2}\cdot \twonorms{\p}}+ \frac{10}{9}\frac{9}{m\cdot \twonorm{\p}} \right) \tag{$m\geq \sqrt{2k}+1\geq 10$, or $\Pr[S_{m-1} \geq k]=0$.}\\
&\operatorname*{\le}_{(\ast)} \frac{1}{\eps^2}\left(\frac{10}{8}\cdot \frac{1+\eps}{k} + \frac{10\sqrt{1+\eps}}{\sqrt{2k}}\right)  
\le \frac{10}{\eps^2}\left(\frac{1}{4k} + \frac{1}{\sqrt{k}}\right) \\ 
&\le \frac{5\eps^2}{2C} + \frac{10}{\sqrt{C}} \leq \frac{5}{2C} + \frac{10}{\sqrt{C}}\\
&<\frac{1}{8}
\end{align*}
for $C\geq 6500$, where $(\ast)$ follows from the choice of $m$.

To upper bound the probability of underestimation, take $m$ to be
largest integer such that 
$(1-\eps)\binom{m}{2}\cdot \twonorms{\p} \leq k$ (so that $(1-\eps)\binom{m+1}{2}\cdot \twonorms{\p} > k$, i.e. $(1-\eps)\frac{m+1}{m-1}\binom{m}{2}\cdot \twonorms{\p} > k$).\footnote{In particular, this implies $\binom{m+1}{2}> k$, from which $m > \sqrt{2k}+1 \gg \frac{1}{\eps^2}$.} Then,
\begin{align*}
  \Pr\left[\text{underestimation}\right] &= \Pr[M> m] = \Pr[ \forall \ell \leq m,\  S_\ell < k]
  = \Pr\left[S_{m} < k\right] \\
&= \Pr\left[\Ex[S_{m}] - S_{m} > \Ex[S_{m}] - k\right] \\
&\le \Pr\left[\Ex[S_{m}] - S_{m} > \left(1-(1-\eps)\frac{m+1}{m-1}\right) \cdot\binom{m}{2}\cdot  \twonorms{\p}\right]\\
&= \Pr\left[\Ex[S_{m}] - S_{m} > \frac{\eps m -1 }{m-1} \cdot\binom{m}{2}\cdot  \twonorms{\p}\right] \tag{Note that $\eps m > 1$}\\
&\le \left(\frac{m-1}{\eps m -1 }\right)^2\frac{\var[S_{m}]}{\left(\binom{m}{2}\cdot  \twonorms{\p} \right)^2}
\le \left(\frac{2}{\eps}\right)^2\frac{\var[S_{m}]}{\left(\binom{m}{2}\cdot  \twonorms{\p} \right)^2}\\
&\le 4\frac{\binom{m}{2} \cdot\twonorms{\p} + m^3 \cdot \twonorm{\p}^3}{\eps^2 \cdot\binom{m}{2}^2\cdot \twonorm{\p}^4}
\le\frac{4}{\eps^2}\left(\frac{1}{\binom{m}{2}\cdot \twonorms{\p}}+ \frac{9}{m\cdot \twonorm{\p}} \right)\\
%&=\frac{4}{\eps^2}\left(\frac{m+2}{m}\frac{1}{\binom{m+1}{2}\cdot \twonorms{\p}}+ \frac{m+1}{m}\frac{9}{(m+1)\cdot \twonorm{\p}} \right)\\
&\leq\frac{4}{\eps^2}\left(\frac{12}{10}\frac{1}{\binom{m}{2}\cdot \twonorms{\p}}+ \frac{11}{10}\frac{9}{m\cdot \twonorm{\p}} \right) \tag{$m\geq \sqrt{2k}+1\geq 10$.}\\
&\operatorname*{\le}_{(\ast)} \frac{6}{\eps^2}\left(\frac{1-\eps}{k} + \frac{\sqrt{1-\eps}}{\sqrt{2k}}\right)  
\le \frac{6}{\eps^2}\left(\frac{1}{k} + \frac{1}{\sqrt{2k}}\right) \\ 
&\le \frac{6\eps^2}{C} + \frac{6}{\sqrt{2C}} \leq \frac{6}{C} + \frac{6}{\sqrt{2C}}\\
&<\frac{1}{8}
\end{align*}
for $C\geq 1250$, where $(\ast)$ follows from the choice of $m$.\cmargin{To do: doublecheck changes I've made.}

By the union bound, overestimation or underestimation happens with
probability at most 1/4. Finally, in the event that we have a good
estimation, we have that the number $m$ of samples satisfy
\[
 \frac{k}{(1+\eps)\cdot\twonorms{\p}} \le \binom{m}{2} \le \frac{k}{(1-\eps)\cdot\twonorms{\p}}.
\]
Therefore, we have that
$m=\Theta(\sqrt{k}/\twonorm{\p})=
\Theta(1/(\eps^2\cdot\twonorm{\p}))$. \medskip

\noindent To bound the expected number of samples, we consider two cases (recall that the asymptotics here are taken, unless specified otherwise, while viewing $\p$ as a sequence of distributions $(\p^{(n)})_{n\geq 0}$ and letting $n\to\infty$):
\begin{itemize}
  \item if $\infnorm{\p} = \Omega(\twonorm{\p})$ (i.e., $\infnorm{\p} = \Theta(\twonorm{\p})$), then we denote by $i_\infty$ the element such that $\p_{i_\infty} = \infnorm{\p}$. It follows from properties of the negative binomial distribution that the expected number $M_{\infty}$ of draws necessary to see $\ell=\Theta(\sqrt{k})$ different draws of $i_\infty$ (and thus $k=\binom{\ell}{2}$ collisions) is $\Theta(\sqrt{k}/\infnorm{\p})$, so that $\Ex[M] \leq \Ex[M_{\infty}] = O(\frac{1}{\eps^2 \infnorm{\p}})$.
  
  \item on the other hand, if $\infnorm{\p} = o(\twonorm{\p})$, then we can apply Theorem 4 of~\cite{CP:00} (see also~\cite{MathOverflow:17}) to get that $\Ex[M] \sim_{n\to\infty} \frac{C_k}{\twonorm{\p}}$, where $C_k = \binom{k-\frac{1}{2}}{k-1}\sqrt{\frac{\pi}{2}}\sim_{k\to\infty} \sqrt{2k}$. Recalling that $k=\Theta(1/\eps^4)$, we obtain $\Ex[M] = \Theta(\frac{1}{\eps^2\twonorm{\p}})$, as claimed.
\end{itemize}
\end{proof}

Note that the sample complexity of
Algorithm~~\textsc{Estimate-$\lp[2]$-norm} is tight for near-uniform
distributions (at least, in terms of dependency on
$\twonorm{\p}$). Consider a distribution~$\p$ on $n$ elements with
probability values in $\{(1-\delta)/n,(1+\delta)/n\}$ for some
small~$\delta$. Even though $\twonorm{\p}$ can have sufficiently high
$\twonorm{\p}$ and should be distinguished from the uniform
distribution on $n$ elements, there will be no repetition in the
sample until $\Omega(\sqrt{n})=\Omega(1/\twonorm{\p})$ samples are
taken. The following lemma generalizes this argument.
\begin{lemma}\label{lemma:lb:l2:estimation}
  For any distribution $\p$ and $\eps\in(0,1/3)$, estimation of $\twonorms{\p}$ within a
  multiplicative factor of $(1+\eps)$ requires
  $\Omega(1/(\sqrt{\eps}\twonorm{\p}))$ samples from~$\p$.
\end{lemma}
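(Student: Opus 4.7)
The plan is to invoke the Wishful Thinking Theorem (\autoref{theo:valiant:wishful}) on a standard indistinguishable pair: a uniform distribution and a balanced two-valued perturbation of it. Specifically, for a large integer $n$ (to be linked to $\twonorm{\p}$ at the end), I would set $\p^\yes \eqdef \unif_{[n]}$ and let $\p^\no$ assign probability $(1-\delta)/n$ to $n/2$ of the elements of $[n]$ and $(1+\delta)/n$ to the other $n/2$, with $\delta \eqdef \sqrt{3\eps}$. A direct calculation yields $\twonorms{\p^\yes} = 1/n$ and $\twonorms{\p^\no} = (1+3\eps)/n$, and since $(1+\eps)^2 < 1+3\eps$ for every $\eps \in (0,1)$ (the difference equals $\eps(1-\eps)$), no single numerical output can be a $(1+\eps)$-multiplicative approximation of both values. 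Hence, any correct $(1+\eps)$-estimator of $\twonorms{\cdot}$ must in particular distinguish $\p^\yes$ from $\p^\no$, and it suffices to lower-bound the sample complexity of this distinguishing task.

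To carry out the lower bound, I would apply \autoref{theo:valiant:wishful} with $k = c\sqrt{n/\eps}$ for a small absolute constant $c > 0$ (e.g.\ $c = 1/100$). The $\lp[\infty]$ condition $\norminf{\p^\yes},\norminf{\p^\no} \leq 1/(500k)$ reduces to $(1+\delta)/n \leq 1/(500c\sqrt{n/\eps})$, which holds as soon as $n$ is polynomially larger than $1/\eps$. The technical core is the moment-sum bound. Writing $m^\yes(j) = k^j/n^{j-1}$ and $m^\no(j) = (k^j/n^{j-1})\cdot \frac{(1-\delta)^j+(1+\delta)^j}{2}$, the differences take the clean form $m^\no(j) - m^\yes(j) = (k^j/n^{j-1})\cdot c_j$ with $c_j \eqdef \sum_{i\geq 1}\binom{j}{2i}\delta^{2i}$. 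The $j=2$ term is the binding one: since $k^2/n = c^2/\eps \geq 1$, one gets a contribution of order $c\sqrt{\eps}$, which can be made arbitrarily small by choosing $c$ small. For $j \geq 3$, I would use the identity $k^j/n^{j-1} = c^j\, n^{1-j/2}/\eps^{j/2}$, which decays like $(c^2/(n\eps))^{(j-2)/2}$ relative to the $j=2$ factor, together with $c_j \leq (1+\delta)^j \leq e^{j\sqrt{3\eps}}$, to sum the tail as a convergent geometric series.

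The main hurdle is the clean book-keeping of the $j \geq 3$ terms of the moment sum; the point is to show that once $n$ exceeds some threshold $n_0(\eps) = \poly(1/\eps)$, the tail contribution is dominated by the $j=2$ one, so that the whole sum is at most $O(c\sqrt{\eps}) < 1/24$ provided $c$ is a sufficiently small absolute constant and $\eps \leq 1/3$. Once the two hypotheses of \autoref{theo:valiant:wishful} are verified, the theorem directly gives that no algorithm using $k = c\sqrt{n/\eps}$ samples can distinguish $\p^\yes$ from $\p^\no$ with high constant probability, and hence no $(1+\eps)$-multiplicative estimator can succeed. Since $\twonorm{\p^\yes} = 1/\sqrt{n}$, this is exactly an $\Omega(1/(\sqrt{\eps}\twonorm{\p^\yes}))$ sample lower bound; given any prescribed target value of $\twonorm{\p}$, one chooses $n \eqdef \lceil 1/\twonorm{\p}^2 \rceil$ and obtains the stated conclusion.
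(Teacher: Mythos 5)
There is a genuine gap: the lemma is an \emph{instance-by-instance} lower bound, asserting the $\Omega(1/(\sqrt{\eps}\twonorm{\p}))$ bound for \emph{every} distribution $\p$ (as the remark that follows it explicitly emphasizes), whereas your construction only establishes it when $\p$ is the uniform distribution on $[n]$ (or its two-valued perturbation $\p^\no$). Choosing $n = \lceil 1/\twonorm{\p}^2\rceil$ at the end does not repair this: it shows that \emph{some} distribution with the prescribed $\lp[2]$ norm is hard, not that the given $\p$ is. In particular your argument says nothing about, say, a highly skewed $\p$, and it breaks entirely for distributions with large $\twonorm{\p}$ -- for a near-point-mass, $n$ is a small constant and the hypothesis $\norminf{\cdot}\le 1/(500k)$ of the Wishful Thinking Theorem is violated for any useful $k$.

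The paper's proof takes the opposite route: starting from an \emph{arbitrary} $\p$, it builds a nearby distribution $\q$ with $\normtwo{\q}^2 = (1\pm 3\eps)\normtwo{\p}^2$ while keeping $\totalvardist{\p}{\q} = O(\sqrt{\eps}\,\normtwo{\p})$ (by moving a small amount of mass onto one fresh element when $\eps \ge \normtwo{\p}^2$, or spreading it over many fresh elements when $\eps < \normtwo{\p}^2$). Since a $(1+\eps)$-estimator of $\twonorms{\cdot}$ would distinguish $\p$ from $\q$, and distinguishing two distributions at TV distance $d$ needs $\Omega(1/d)$ samples, the bound follows for that specific $\p$. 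This is both more elementary (no moment-matching machinery) and strictly more general than your approach. If you want to keep the Wishful Thinking route, you would at minimum need a construction of $\p^\no$ tailored to the given $\p$ rather than to the uniform distribution, and you would still need to handle the large-$\twonorm{\p}$ regime separately.
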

\begin{proof}
  Take any distribution $\p$. We first consider the case $\eps \geq \normtwo{\p}^2$. Fix any element $c\in\N$ such that $\p(c)=0$ (we can assume for simplicity one exists; otherwise, since we can find, for any $\eta >0$, $c\in\N$ such that $\p(c) < \eta$, we can repeat the argument below for an arbitrarily small $\eta$), and let $\gamma \eqdef \frac{\normtwo{\p}+\sqrt{3\eps+(1+3\eps)\normtwo{\p}^2}}{1+\normtwo{\p}^2}$. Then, we define the distribution $\q$ on $\N$ as the mixture
  \[
      \q \eqdef (1-\gamma\normtwo{\p})\p + \gamma\normtwo{\p}\indic{c}
  \]
  which satisfies $\totalvardist{\p}{\q} = \gamma\normtwo{\p}$, and
  \[
      \normtwo{\q}^2 = (1-\gamma\normtwo{\p})^2\normtwo{\p}^2 +\gamma^2\normtwo{\p}^2 = ((1-\gamma\normtwo{\p})^2+\gamma^2)\normtwo{\p}^2
      = (1+3\eps)\normtwo{\p}^2
  \]
  the last equality from our choice of $\gamma$. Since $\eps<1$, any algorithm that estimates the squared $\lp[2]$ norm of an unknown distribution can be used to distinguish between $\p$ and $\q$. However, from the very definition of total variation distance, distinguishing between $\p$ and $\q$ requires $\bigOmega{1/\totalvardist{\p}{\q}}$ samples. Since
  \[
      \gamma \leq \normtwo{\p}+\sqrt{3\eps+2\normtwo{\p}^2} \leq (1+\sqrt{5})\sqrt{\eps}
  \]
  (as $\normtwo{\p}^2\leq \eps$) 
  we get a lower bound of $\bigOmega{\frac{1}{\sqrt{\eps}\normtwo{\p}}}$.
  
  We now turn to the case $\eps < \normtwo{\p}^2$. The construction will be similar, but setting $\gamma \eqdef {3\eps}/{\normtwo{\p}}$, and spreading the $\gamma\normtwo{\p}= 3\eps$ probability uniformly on $m\eqdef \frac{3\eps}{(1-3\eps)\normtwo{\p}^2}$ elements $c_1,\dots,c_m$ outside the support of $\p$, instead of just one. It is straightforward to check that in this case, the distribution $\q$ we defined is such that
  \[
      \normtwo{\q}^2 = (1-3\eps)^2\normtwo{\p}^2 + \frac{9\eps^2}{m}
      = (1-3\eps)\normtwo{\p}^2
  \]
  so again, by the same argument, any algorithm which can approximate $\normtwo{\p}^2$ to $1+\eps$ can be used to distinguish between $\p$ and $\q$, and thus requires 
  $
      \bigOmega{\frac{1}{\gamma\normtwo{\p}}} = \bigOmega{\frac{1}{\sqrt{\eps}\normtwo{\p}}}
  $
  samples.
\end{proof}

\begin{remark}
  We emphasize that the above theorem is on an instance-by-instance basis, and applies to \emph{every} probability distribution $\p$. In contrast, it is not hard to see that for \emph{some} distributions $\p$, a lower bound of $\bigOmega{1/(\normtwo{\p}\eps^2)}$ holds: this follows from instance from~\cite[Theorem 15]{AOST:17}. This latter bound, however, cannot hold for every probability distribution, as one can see e.g. from a (trivial) distribution $\p$ supported on a single element, for which $\lp[2]$-norm estimation can be done with $O(1/\eps)= O(1/(\normtwo{\p}\eps)$ samples. 
\end{remark}

%%%%%%%%%%%%%%%%%%%%%%%%%%%%%%%%%%%%%%%%%%%%%%%%%%%%%%%%%%%%%%%%%%%%%%%%%%%%%%%%%%%%%%%%
%%%%%%%%%%%%%%%%%%%%%%%%%%%%%%%%%%%%%%%%%%%%%%%%%%%%%%%%%%%%%%%%%%%%%%%%%%%%%%%%%%%%%%%%

\subsection{Testing Uniformity}
In this section, we present our algorithm for testing uniformity of a
distribution. We first give a brief overview of the algorithm. The
algorithm first estimates the $\lp[2]$ norm of the input distribution
and uses this value to obtain an estimate on the support size of
the distribution. Then, the algorithm tries to distinguish a uniform
distribution from a distribution that is far from any uniform
distribution by using the number of 3-way collisions in a freshly
taken sample set.  For two distributions with the same $\lp[2]$ norm,
where one is a uniform distribution and the other is far from being
uniform, the latter is expected to produce more 3-way collisions in a
large enough sample set. The algorithm keeps taking samples up to a
number based on the support-size estimate and keeps track of the 3-way
collisions in the sample set to decide whether to accept or reject the
input distribution. 

The following lemma formalizes the intuition that if the $\lp[2]$ and
the $\ell_3$ norm of a distribution is close to those of the uniform
distribution on $N$ elements, then the distribution is close to being uniform.

\begin{lemma}\label{lem:smallnormtouni}
  Let $\p$ be a distribution over $\N$ and $N\in\N$ such that
\[
 \frac{1-\eps}N \le \normtwo{\p}^2 \le \frac{1+\eps}N
\]
and
\[
 \threenormc{\p} \le \frac{1+\delta}{N^2},
 \]
for some $0<\eps,\delta<0.04$. Then, the distance of $\p$ to $\U$
can be upper bounded as 
\[
  \totalvardist{\p}{\U} \le 9\sqrt[3]{\delta+3\eps}.
\]
\end{lemma}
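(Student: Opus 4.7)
The plan is to revisit the Cauchy--Schwarz argument that underpins~\autoref{lemma:equal:moments:characterize:uniform} and make it quantitative. Set $\tau \eqdef \normtwo{\p}^2$, which should be approximately $1/N$ under the hypotheses. A direct expansion yields the identity
\begin{equation*}
\sum_{i} \p_i (\p_i - \tau)^2 \;=\; \threenormc{\p} \;-\; 2\tau\,\normtwo{\p}^2 \;+\; \tau^2 \;=\; \threenormc{\p} - \tau^2,
\end{equation*}
so the ``uniformity defect'' $\threenormc{\p}-\tau^2$ directly controls a weighted variance of $\p$ around $\tau$. Plugging in $\threenormc{\p} \leq (1+\delta)/N^2$, $\tau^2 = \normtwo{\p}^4 \geq (1-\eps)^2/N^2$, and $1/N^2 \leq \tau^2/(1-\eps)^2$, and using the hypothesis $\eps,\delta < 0.04$ to absorb the lower-order cross terms, I will arrive at the clean inequality $\sum_{i} \p_i (\p_i - \tau)^2 \leq \alpha\,\tau^2$ with $\alpha \eqdef \delta + 3\eps$.

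Next, I pick a threshold $\eta \in (0,1)$ to be optimized later and introduce the ``good set''
\begin{equation*}
G \;\eqdef\; \setOfSuchThat{i \in \supp{\p}}{(1-\eta)\tau \;\leq\; \p_i \;\leq\; (1+\eta)\tau}.
\end{equation*}
A Chebyshev-type application of the weighted-variance bound gives $\p(\bar G) \leq \alpha/\eta^2$, since every $i \in \bar G$ contributes at least $\p_i\,\eta^2\tau^2$ to the sum. In turn, because every element of $G$ has $\p$-mass in $[(1-\eta)\tau, (1+\eta)\tau]$, summing over $G$ pins down its size: $|G|(1-\eta)\tau \leq \p(G) \leq 1$ bounds $|G|$ from above, while $|G|(1+\eta)\tau \geq \p(G) \geq 1 - \alpha/\eta^2$ bounds it from below, and both combine to $\bigl|\,|G|\tau - 1\,\bigr| = O(\eta + \alpha/\eta^2)$.

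The candidate uniform distribution will be $\unif_G \in \class_U$. Splitting the total variation over $G$ and $\bar G$ and using the triangle inequality,
\begin{equation*}
2\,\totalvardist{\p}{\unif_G} \;\leq\; \sum_{i \in G} \abs{\p_i - \tau} \;+\; \sum_{i \in G} \abs{\tau - 1/|G|} \;+\; \sum_{i \notin G} \p_i,
\end{equation*}
where the first sum is at most $|G|\eta\tau \leq \eta/(1-\eta)$, the second equals $\bigl|\,|G|\tau - 1\,\bigr|$, and the third is at most $\alpha/\eta^2$; altogether the right-hand side is $O(\eta + \alpha/\eta^2)$. Balancing the two terms with $\eta = \alpha^{1/3}$ gives $\totalvardist{\p}{\U} \leq \totalvardist{\p}{\unif_G} = O(\alpha^{1/3})$. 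The main obstacle is not any single step but the bookkeeping threading through them: one must be careful in passing from the two ambient hypotheses to the single coefficient $\alpha = \delta+3\eps$, and then in tracking the $(1 \pm O(\eta))$ slacks so that the final constant actually lands at or below $9$, which looks comfortable given the explicit assumption $\eps,\delta < 0.04$.
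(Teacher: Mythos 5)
Your proposal is correct and is, in substance, the paper's own argument. The quantity $\sum_i \p_i(\p_i - \tau)^2$ you introduce is exactly the variance of the random variable $R$ used in the paper (which takes value $p_i$ with probability $p_i$, so $\Ex[R]=\tau$ and $\var[R]=\threenormc{\p}-\tau^2$), the threshold set $G$ is the paper's set $F$, and the Chebyshev step, the $|G|$ sandwich, and the comparison to $\unif_G$ all appear verbatim. The one small cosmetic improvement in your version is that you center the Chebyshev window at $\tau=\Ex[R]$ rather than at $1/N$, which avoids the paper's extra step of absorbing the $\eps/N$ offset between $1/N$ and $\Ex[R]$ via a concavity argument; plugging $\eta=\alpha^{1/3}$ (with $\alpha=\delta+3\eps<0.16$, hence $\eta<0.55$) into your three-term bound and dividing by $2$ in fact lands around $2.7\sqrt[3]{\delta+3\eps}$, comfortably under the claimed $9\sqrt[3]{\delta+3\eps}$.
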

\begin{proof}
  Note that the condition on the $\twonorms{\p}$ implies that $\p$ ``ought to be'' 
  distributed roughly uniformly over $N$ elements, or otherwise would deviate significantly enough from uniformity to impact its $\ell_3$ norm. The condition on
  $\threenormc{\p}$ further strengthens how evenly $\p$ is
  distributed, ensuring that this latter case cannot happen. Below we formalize this intuition and, in particular,
  use the conditions on the norms to upper bound the total mass on the
  items that have probability significantly larger than $1/N$.

Let $R$ be a random variable such that $R$ takes value $p_i$ with
probability $p_i$, for each element $i$ in the support set of
$\p$. Then, $\Ex[R] = \sum_{i\in \N} p_i^2 = \twonorms{\p}$, which implies
\[
 \frac{1-\eps}N \le \Ex[R]  \le \frac{1+\eps}N
\]
and
\begin{align*}
\var[R] &= \Ex[R^2] - \Ex[R]^2\\
&= \sum_{i\in \N} p_i^3 - \twonorm{\p}^4\\
&\le \frac{1+\delta}{N^2} - \frac{(1-\eps)^2}{N^2}\\
&\le \frac{\delta + 2\eps}{N^2}.
\end{align*}

We now derive an upper bound on the $\ell_1$ distance
$\totalvardist{\p}{\U}$. We first obtain an upper bound on the total
weight of elements with probability significantly above or below
$\frac1N$. Then, we can proceed to compare the distribution $\p$ to a
uniform distribution with support size close to $N$.

First, we can bound the total probability mass of items $i$ such that
$p_i>\frac{1+\sqrt[3]{\delta+3\eps}}N$ or
$p_i<\frac{1-\sqrt[3]{\delta+3\eps}}N$ by looking at the probability
of a large deviation of $R$ from its expectation. In particular,
\begin{align*}
\Pr\left[ \left(R > \frac{1+\sqrt[3]{\delta+3\eps}}N\right) \vee
\left(R < \frac{1-\sqrt[3]{\delta+3\eps}}N\right)\right] 
&\le \Pr\left[ | R -\Ex[R] | >
   \frac{\sqrt[3]{\delta+3\eps}-\epsilon}N \right]\\ 
&\le \Pr\left[ | R -\Ex[R] | >
   \frac{\sqrt[3]{\delta+2\eps}}N \right]\\ 
&\le \frac{\var[R]\cdot N^2}{\sqrt[3]{(\delta+2\eps)^2}}\\
&\le \sqrt[3]{\delta+2\eps}
\end{align*}
Note that the second inequality above follows from that
$\sqrt[3]{\delta+3\eps}-\epsilon \ge \sqrt[3]{\delta+2\eps}$ when
$\delta+2\epsilon \le 3^{-3/2} \le 0.18$, by the concavity of the
function $f(x)=\sqrt[3]{x}$ and $f'(x)\ge 1$ for $x\le 3^{-3/2}$.

We now have established that a probability mass of at least
$1-\sqrt[3]{\delta+2\eps}$ of $\p$ is placed on elements with
individual probabilities in the interval 
$[\frac{1-\sqrt[3]{\delta+3\eps}}N,\frac{1+\sqrt[3]{\delta+3\eps}}N]$.
Call this set $F$. Thus, we have that
\[
 \frac{(1-\sqrt[3]{\delta+2\eps})N}{1+\sqrt[3]{\delta+3\eps}}
\le |F| \le \frac{N}{1-\sqrt[3]{\delta+3\eps}}.
\]
Now consider the uniform distribution $\unif_F$ on the set $F$. Since
$\totalvardist{\p}{\U} \le \totalvardist{\p}{\unif_F}$, it suffices to upper
bound the latter. Given that
\[
1-\sqrt[3]{\delta+3\eps} <
1-\sqrt[3]{\delta+2\eps} <
1+\sqrt[3]{\delta+2\eps} <
\frac{1-\sqrt[3]{\delta+3\eps}}{1-\sqrt[3]{\delta+2\eps}},
\]
for any $i\in F$, we have that 
\[
|p_i-\frac{1}{|F|}| \le \frac{4\sqrt[3]{\delta+3\eps}}N.
\]
Finally, we can conclude that
\begin{align*}
\totalvardist{\p}{\unif_F} &= \p(\N\setminus F) + \sum_{i\in F} |p_i -
                 \frac{1}{|F|}|\\
&\le \sqrt[3]{\delta+2\eps} + \sum_{i\in F} \frac{4\sqrt[3]{\delta+3\eps}}N\\
&\le \sqrt[3]{\delta+2\eps} +
  \frac{4\sqrt[3]{\delta+3\eps}}{1-\sqrt[3]{\delta+3\eps}}\\ 
&\le 9\sqrt[3]{\delta+3\eps}
\end{align*}
establishing the lemma.
\end{proof}

The algorithm for testing uniformity is presented below in~\autoref{alg:uniformity}.
 
\begin{algorithm}[H]
  \caption{Testing Uniformity}
  \label{alg:uniformity}
  \begin{algorithmic}[1]
  \Procedure{Test-Uniformity}{$\p,\eps$}
  \State $\delta\gets \eps^3/5832$
  \State $N\gets 1/$\Call{Estimate-$\lp[2]$-norm}{$\p,\delta$}
  \State $k\gets \lceil \eps^{-18}\rceil $
  \State Keep taking samples from $\p$ until you see $k$ 3-way collisions
  or  reach 
  \Statex \hskip\algorithmicindent 
$M=\sqrt[3]{3(1-4\delta)k}N^{2/3}$  samples, whichever happens first. 
  \If{more than $k$ 3-way collisions are observed in the sample set}
  \State \Return \reject
\Else
\State \Return \accept
\EndIf
\EndProcedure
\end{algorithmic}
\end{algorithm}

Note that, for a uniform distribution, $\lp[2]$ norm estimation will
give a reliable estimate $N$ for the support size. Then, we will show
that $M=O(\eps^{-6}N^{2/3})$ samples will be unlikely to produce more
than $k$ 3-way collision. On the other hand, for a distribution that
is far from a uniform distribution, the support size estimation in the
algorithm will be an underestimation. In additions, the $\ell_3$ norm
of such a distribution will be higher than that of the uniform
distribution with that estimated support size. As a result, the
algorithm will observe more than $k$ 3-way collisions in the
subsequent samples with high probability as an evidence that the input
distribution is not uniform.

\begin{theorem} \label{thm:upperbound}
  Algorithm~\textsc{Test-Uniformity}, given independent samples from a
  distribution $\p$ over $\N$ and $0<\eps<\frac12$, accepts if
  $\p\in\U$ and rejects $\p$ such that $\Delta(\p,\U)\ge \eps$,
  with probability at least 3/4. The sample complexity of the
  algorithm is $\Theta(1/\eps^6\threenorm{\p})$.
\end{theorem}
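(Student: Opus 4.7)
The plan is to analyze the tester's two phases in succession, conditioning on a good outcome in each. First, I invoke Lemma~\ref{lem:normest} to obtain, with probability at least $3/4$, the bound $(1-\delta)\twonorms{\p}\le\gamma\le(1+\delta)\twonorms{\p}$ for the estimate $\gamma$ returned in phase one, so that $N=1/\gamma$ is a multiplicative $(1\pm O(\delta))$-approximation to $1/\twonorms{\p}$; conditioning on this ``good estimate'' event costs only constant probability in the final union bound. Next, for the second phase, let $X$ denote the number of 3-way collisions among the $M$ fresh samples drawn, so that by linearity $\Ex[X]=\binom{M}{3}\threenormc{\p}$, and all that matters thereafter is how $N^2\threenormc{\p}$ compares to $1$.

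For the completeness direction, $\p\in\U$ gives $\threenormc{\p}=\twonorms{\p}^2$ by Lemma~\ref{lemma:equal:moments:characterize:uniform}, hence $N^2\threenormc{\p}=1\pm O(\delta)$, which places $\Ex[X]$ safely below the rejection threshold for the algorithm's choice of $M^3=3(1-4\delta)kN^2$. For the soundness direction, under the good-estimate event the contrapositive of Lemma~\ref{lem:smallnormtouni} (applied with lemma-parameters chosen, consistently with $\delta=\eps^3/5832$, so that $9\sqrt[3]{\delta_L+3\eps_L}<\eps$) forces $\threenormc{\p}>(1+\delta_L)/N^2$ with $\delta_L=\Theta(\eps^3)$, creating a gap of order $\Theta(\eps^3 k)=\Theta(\eps^{-15})$ between $\Ex[X]$ in the two regimes. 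In both directions I then apply Chebyshev, fed by a variance estimate for $X$ obtained by decomposing $\Ex[X^2]=\sum_{T,T'}\Ex[X_T X_{T'}]$ according to the overlap $|T\cap T'|\in\{0,1,2,3\}$ and bounding each contribution via the moments $\threenormc{\p}$, $\norm{\p}_4^4$, $\norm{\p}_5^5$ through Fact~\ref{fact:ineq:moments:holder}. The resulting standard deviation is $O(\sqrt{k})=O(\eps^{-9})$, which is dwarfed by the gap, so $X$ ends up on the intended side of the threshold with probability at least $7/8$; a union bound then yields the claimed $2/3$ success.

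The sample complexity $\Theta(1/(\eps^6\threenorm{\p}))$ follows from three ingredients: the $O(1/(\delta^2\twonorm{\p}))=O(1/(\eps^6\twonorm{\p}))$ samples taken in the estimation phase by Lemma~\ref{lem:normest}; the worst-case bound $M=O(\eps^{-6}N^{2/3})$ of the testing phase, which on the good-estimate event and via Fact~\ref{fact:ineq:moments:holder} ($\twonorm{\p}^{4}\le\threenormc{\p}$, whence $N^{2/3}\lesssim 1/\threenorm{\p}$) becomes $O(1/(\eps^6\threenorm{\p}))$; and a negative-binomial-style argument showing that for non-uniform $\p$, the early-stopping criterion triggers after $O(\eps^{-6}/\threenorm{\p})$ samples in expectation (comparing the 3-collision counting process to that of its heaviest elements). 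The main technical obstacle is the variance analysis of $X$: unlike the 2-way statistic used in Lemma~\ref{lem:normest}, the 3-way count couples higher-order moments, and ensuring that the standard deviation stays well below the $\Theta(\eps^3 k)$ gap is the piece of bookkeeping that actually pins down the choices $k=\lceil\eps^{-18}\rceil$ and $\delta=\eps^3/5832$.
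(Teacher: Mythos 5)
Your outline follows the paper's proof closely: same two-phase structure ($\lp[2]$ estimation from Lemma~\ref{lem:normest}, then a 3-way-collision count), the same use of Lemma~\ref{lemma:equal:moments:characterize:uniform} for completeness and the contrapositive of Lemma~\ref{lem:smallnormtouni} for soundness, the same Chebyshev-based deviation argument, and the same route from $N^{2/3}$ to $1/\threenorm{\p}$ via $\twonorm{\p}^4 \le \threenormc{\p}$ for the sample complexity. So the approach is not different; the issue is a concrete miscalculation in the step you yourself flag as the crux.

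You claim the standard deviation of the 3-collision count $X$ is $O(\sqrt{k}) = O(\eps^{-9})$, ``dwarfed by'' the gap $\Theta(\eps^3 k) = \Theta(\eps^{-15})$. That estimate does not follow from the overlap decomposition you describe. The dominant term there is the $\abs{T\cap T'}=1$ contribution, $\Theta(m^5)\norm{\p}_5^5$; bounding $\norm{\p}_5^5 \le \threenorm{\p}^5$ (monotonicity of $\ell_p$ norms, not Fact~\ref{fact:ineq:moments:holder}, which goes the wrong way here) and plugging in $m = \Theta(k^{1/3}N^{2/3})$, $\threenorm{\p} = \Theta(N^{-2/3})$ yields $\var[X] = O(k^{5/3})$, i.e. a standard deviation of $O(k^{5/6}) = O(\eps^{-15})$ --- comparable to the gap, not dwarfed by it. This is exactly what forces $k = \Theta(\eps^{-18})$: Chebyshev gives a failure probability of order $k^{5/3}/(\delta^2 k^2) = 1/(\delta^2 k^{1/3})$, and $\delta^2 = \Theta(\eps^6)$ needs $k^{1/3} \gtrsim \eps^{-6}$. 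Your $O(\sqrt{k})$ figure would instead give $k/(\delta^2 k^2) = 1/(\delta^2 k)$ and hence only require $k = \Theta(\eps^{-6})$, which contradicts your own (correct) statement that the variance bookkeeping is what pins down $k = \lceil\eps^{-18}\rceil$. So while the structure of the proof is right and matches the paper's, the variance claim is wrong by a polynomial factor in $k$, and the ``dwarfed'' intuition gives a misleadingly optimistic sense of the margins.
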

\begin{proof}
In the proof, we will need simple distributional properties of the
number of 3-way collisions, analogous to the arguments in the proof of~\autoref{lem:normest}. Let $T_m$ be the total number of 3-way
collisions in $m$ samples from a distribution $\p$. Then, we have that
\[
 \Ex[T_m] = \binom{m}3\cdot \threenormc{\p}
 \]
and
\[
 \var[T_m] \le O\left(m^3\threenormc{\p} +
m^4\threenorm{\p}^4+m^5\threenorm{\p}^5\right).
\]

For the completeness argument, take $\p=U_S$ for some subset $S$ of
$\N$. Then, by~\autoref{lem:normest}, variable~$N$ from the
algorithm will be within $(1\mp\delta)$ of $|S|$, with probability
3/4. Then, the probability that the number of 3-way collisions in
$m=M$ samples from $\p$ is more than $k$ is
  \begin{align*}
\Pr\left[T_m>k\right] &\le \Pr\left[T_m - \Ex[T_m] > k -
                        (1-4\delta)kN^2\cdot \frac{1}{|S|^2} \right]\\
&\le \Pr\left[T_m - \Ex[T_m] > k - (1-4\delta)kN^2\cdot
                        \frac{(1+\delta)^2}{N^2} \right]\\
&\le \Pr\left[T_m - \Ex[T_m] > \delta k \right]\\
&\le  \delta^{-2}k^{-2}\cdot \var[T_m]\\
&\le  O(\eps^{-6}k^{-2})\cdot O(k^{5/3})\\
&\le \frac{1}{O(\eps^6k^{1/3})}\\
&\le \frac18.
\end{align*}
\tmargin{Constant in the variance needs to be absorbed}

Hence, with constant probability, there will be at most $k$
3-way collisions in the samples from $\p$ and it will be accepted. The
sample and running time complexity is then
\[
\Theta\left(\frac{1}{\eps^6\twonorm{\p}}+\eps^{-6}N^{2/3}\right) =
\Theta\left(\frac{1}{\eps^6\twonorm{\p}}
+\frac{1}{\eps^6\threenorm{\p}}\right)  
= \Theta\left(\frac{1}{\eps^6\threenorm{\p}}\right)
.
\] 

Now, for the soundness argument, suppose that after
$m=M$ samples, at most $k$ 3-way collisions are
observed. We can then argue that, with some constant probability,
$\threenormc{\p}$ is less than $\frac{1+5\delta}{N^2}$. If
$\threenormc{\p}>\frac{1+5\delta}{N^2}$, then
\[
 \Ex[T_m] =\binom{m}3\cdot \threenormc{\p} 
> (1-4\delta)kN^2 \cdot
\frac{1+5\delta}{N^2}
\ge (1+\delta/2)k.
\]
Then,
\begin{align*}
\Pr[T_m \le k ] &=   \Pr[|T_m -\Ex[T_m]| \ge \delta k/2 ]\\
&\le \frac{4\var[T_m]}{\delta^2k^2}\\
&\le O\left(
  \frac{4k^{5/3}N^{10/3}(1+5\delta)^{5/3}}{\eps^6k^2N^{10/3}}
\right) \\
&\le O\left(\frac{1}{\eps^{16}k^2}\right)\\
&\le \frac14
\end{align*}

Hence, we have that
\[
 \frac{1-\delta}N \le \twonorms{\p} \le \frac{1+\delta}N
\]
and
\[
\threenormc{\p} \le \frac{1+5\delta}{N^2}.
\]
By~\autoref{lem:smallnormtouni}, we have that $\p$ is within
$9\sqrt[3]{8\delta}=\eps$ of $\U$.

For a distribution $\p$ that is $\eps$-far from uniform, the
algorithm will stop after observing $k$ 3-way collisions with constant
probability. Similar to the arguments above, this will happen when the
number $m$ of samples satisfies
\[
\binom{m}{3} \cdot \threenormc{\p} \approx k. 
\]
Hence, the sample complexity of the algorithm in this case is
\[
\Theta\left(\frac{1}{\eps^6\twonorm{\p}}+\eps^{-6}N^{2/3}\right) =
\Theta\left(\frac{1}{\eps^6\twonorm{\p}}
+\frac{1}{\eps^6\threenorm{\p}}\right)  
= \Theta\left(\frac{1}{\eps^6\threenorm{\p}}\right)
.
\] 
\end{proof}

\section{The Lower Bound}\label{sec:lowerbound}
In this section, we prove our main lower bound, restated below.
\theomainlb*

\begin{proof}
Let $\q\in\distribs{\domain}$ and $\eps\in(0,1]$ be as in the statement of the theorem. To argue that (a permutation of) $\q$ is hard to distinguish from some $\unif\in\class_U$ with few samples (where ``few'' is a function of $\q$ and $\eps$ only), we will rely on the Wishful Thinking Theorem of Valiant~\cite{Valiant:11}. Indeed, this theorem, broadly speaking, ensures that two distributions with moments (nearly) matching are hard to distinguish given only their fingerprints (equivalently, that distinguishing between relabelings of $\q$ and relabelings of $\unif$ is hard). This will be enough to conclude, as $\class_U$ is a symmetric property.

Specifically, we define the two distributions $\p^\yes,\p^\no$ (respectively in $\class_U$ and \eps-far from it) as follows:
  \begin{itemize}
    \item $\p^\no$ is the ``\no-distribution'' imposed to us -- that is, $\p^\no=\q$;
    \item $\p^\yes$ is a uniform distribution on a set $S\subseteq\domain$ of $1/\normtwo{\q}^2$ elements.
  \end{itemize}
(To see why this is a natural choice: the natural ``\yes-distribution'' to consider in order to fool an algorithm is, by the Wishful Thinking Theorem, a distribution that matches as many moments of $\p^\no=\q$ as possible; which, in our case, will mean matching the $\norm{\cdot}_1$, and $\norm{\cdot}_2$ moments. Note that we could try to \emph{approximately} match the third moment, $\norm{\cdot}_3$, as well, but that there is no hope to match it perfectly: if we could do so with a uniform distribution, this by~\autoref{lemma:equal:moments:characterize:uniform} would imply that $\q$ was in $\class_U$ to begin with.)

In what follows, in view of deriving our lower bound we suppose that $k\norm{\q}_3 \ll 1$. Let $\p^\yes$ be a uniform distribution on a subset of $m\eqdef \frac{1}{\normtwo{\q}^2}$ elements. Computing the $k$-based moments of $\p^\yes$ is straightforward: for any $j\geq 2$, we have
\[
      m^\yes(j) = \frac{k^j}{m^{j-1}} = k^j \normtwo{\q}^{2(j-1)} = \frac{\left(k \normtwo{\q}^{2}\right)^j }{\normtwo{\q}^{2}}
\]
while, of course, $m^\no(j) = k^j \norm{\q}_j^j$. It follows that\cmargin{Keep in mind that we need $k \leq \norminf{\q}/500$.}
  \begin{align*}
     \sum_{j=2}^\infty \frac{\abs{m^{\yes}(j)-m^{\no}(j)}}{\sqrt{1+\max(m^{\yes}(j),m^{\no}(j))}}
     &= \sum_{j=3}^\infty \frac{\abs{m^{\yes}(j)-m^{\no}(j)}}{\sqrt{1+\max(m^{\yes}(j),m^{\no}(j))}} \\
     &= \sum_{j=3}^\infty k^j \frac{\abs{\norm{\q}_j^j - \normtwo{\q}^{2(j-1)}}}{\sqrt{1+k^j \max(\normtwo{\q}^{2(j-1)},\norm{\q}_j^j)}}\;.
  \end{align*}
  Now, we will use~\autoref{fact:ineq:moments:holder} to get rid of the absolute value; as it enables us to rewrite our sum as
  \begin{align*}
     \sum_{j=2}^\infty \frac{\abs{m^{\yes}(j)-m^{\no}(j)}}{\sqrt{1+\max(m^{\yes}(j),m^{\no}(j))}}
     &= \sum_{j=3}^\infty k^j \frac{\norm{\q}_j^j - \normtwo{\q}^{2(j-1)}}{\sqrt{1+k^j \norm{\q}_j^j}}\;.
  \end{align*}
  In order to handle this last expression, we can drop the denominator, to get
  \begin{align*}
     \sum_{j=2}^\infty \frac{\abs{m^{\yes}(j)-m^{\no}(j)}}{\sqrt{1+\max(m^{\yes}(j),m^{\no}(j))}}
     &\leq \sum_{j=3}^\infty k^j \left( \norm{\q}_j^j - \normtwo{\q}^{2(j-1)} \right)
     \leq \sum_{j=3}^\infty k^j\norm{\q}_j^j\\
     &\leq \sum_{j=3}^\infty k^j\norm{\q}_3^j\tag{Monotonicity of $\lp[p]$ norms} \\
     &= \frac{k^3\norm{\q}_3^3}{1-k\norm{\q}_3} < \frac{1}{24}
  \end{align*}
  using our assumption that $k\norm{\q}_3 \ll 1$.
  
  This last bound will allow us to apply~\autoref{theo:valiant:wishful} and obtain the lower bound, provided that $\norminf{\p^\yes},\norminf{\p^\no} \leq \frac{1}{500k}$. But this last condition follows from observing that $k\max(\norminf{\p^\yes},\norminf{\p^\no})\leq k\max(\norm{\p^\yes}_3,\norm{\p^\no}_3) = k\max(\norm{\q}_3,\norm{\q}_2^{4/3}) \leq k\norm{\q}_3 \ll 1$.
  
  \end{proof}

\begin{remark}
  Although our lower bound does not directly feature a dependence on the distance parameter $\eps$ (besides applying to any $\eps' \leq \eps$), we conjecture that the right dependence should be linear in $1/\eps$, i.e., $\bigOmega{1/(\eps\norm{\q}_3)}$. (Indeed, while a \emph{square} dependence on $\eps$ appears natural, it cannot hold on an instance-by-instance basis for \emph{all} distributions, analogously to that of~\autoref{lemma:lb:l2:estimation}: as one could see by considering a degenerate distribution $\q$ with $1-\eps$ probability weight on a single element, for which  uniformity testing can be done with $O(1/\eps)=\bigOmega{1/(\eps\norm{\q}_3)}$ samples.) Establishing this linear dependence with our techniques, however, would require at the very least a significant strengthening of the above chain of inequalities, especially at step $(\dagger)$.
\end{remark}
  
%     \begin{align*}
%      \sum_{j=2}^\infty \frac{\abs{m^{\yes}(j)-m^{\no}(j)}}{\sqrt{1+\max(m^{\yes}(j),m^{\no}(j))}}
%      &\leq \sum_{j=3}^\infty k^j \left( \norm{\q}_j^j - \normtwo{\q}^{2(j-1)} \right)
%      \ \sum_{j=3}^\infty k^j\norm{\q}_j^j - \sum_{j=3}^\infty k^j\normtwo{\q}^{2(j-1)} \\
%      &\leq \sum_{j=3}^\infty k^j\norm{\q}_3^j - \sum_{j=3}^\infty k^j\normtwo{\q}^{2(j-1)} \tag{Monotonicity of $\lp[p]$ norms} \\
%      &\leq \sum_{j=3}^\infty k^j\norm{\q}_3^j - \sum_{j=3}^\infty k^j\left(\frac{\threenormc{\q}}{1+\delta}\right)^{\frac{j-1}{2}}  \\
%      &= \sum_{j=3}^\infty k^j\norm{\q}_3^j - k\sum_{j=2}^\infty k^j\sqrt{\frac{\threenormc{\q}}{1+\delta}}^{j} 
%      = k^3\norm{\q}_3^3\left( \frac{1}{1-k\norm{\q}_3} - \frac{1}{1+\delta}\frac{1}{1-k\sqrt{\frac{\norm{\q}_3^3}{1+\delta}}} \right) \\
%      &= k^3\norm{\q}_3^3(\delta+k\norm{\q}_3)(1+o(1)) < \frac{1}{24}
%   \end{align*}

\clearpage
\bibliographystyle{alpha}
\bibliography{references} 

\clearpage
\appendix

\end{document}